\newcommand{\true}{\operatorname{true}}
\newcommand{\false}{\operatorname{false}}
\newcommand{\sharpp}{\#\P}
\newcommand{\return}[1]{\textbf{return }#1}
\newcommand{\RR}{\mathcal{R}}
\newcommand{\downset}[1]{\downarrow #1}
\newcommand{\upset}[1]{\uparrow #1}
\newcommand{\package}[2]{\texttt{#1}~\citep{#2}}
\newcommand{\idx}{\operatorname{index}}
\newcommand{\ind}[1]{\mathbf{1}(#1)}
\newcommand{\dual}[1]{#1_{\operatorname{op}}}
\newcommand{\predset}[1]{\operatorname{pred}(#1)}
\newcommand{\succset}[1]{\operatorname{succ}(#1)}
\newcommand{\LE}[1]{\operatorname{LE}(#1)}
\newcommand{\keywordname}{\textbf{Keywords:}}
\newcommand{\keywords}[1]{\par\addvspace\baselineskip\noindent\keywordname\enspace\ignorespaces#1}
\numberwithin{figure}{section}
\numberwithin{table}{section}
\newtheorem{theorem}{Theorem}
\numberwithin{theorem}{section}
\newtheorem{lemma}[theorem]{Lemma}
\newtheorem{conjecture}[theorem]{Conjecture}
\begin{document}

\title{Uncertainty in Ranking}
\author{Justin Rising\thanks{The views expressed are those of the author and do not necessarily reflect the official policy or position of the Department of Air Force, the Department of Defense or the U.S. government.}}
\affil{justin.rising@us.af.mil\\Kessel Run\\Boston, MA}
\date{}
\maketitle

\begin{abstract}
Ranks estimated from data are uncertain and this poses a challenge in many applications.  However, estimated ranks are deterministic functions of estimated parameters, so the uncertainty in the ranks must be determined by the uncertainty in the parameter estimates.  We give a complete characterization of this relationship in terms of the linear extensions of a partial order determined by interval estimates of the parameters of interest.  We then use this relationship to give a set estimator for the overall ranking, use its size to measure the uncertainty in a ranking, and give efficient algorithms for several questions of interest.  We show that our set estimator is a valid confidence set and describe its relationship to a joint confidence set for ranks recently proposed by Klein, Wright \& Wieczorek.  We apply our methods to both simulated and real data and make them available through the \texttt{R} package \texttt{rankUncertainty}.
\end{abstract}
\keywords{algorithms, combinatorics, confidence set, order theory}

\section{Introduction}
\label{sec_intro}

Estimating the ranks of unknown parameters is a fundamental statistical problem with applications in every field that works with data.  When faced with this problem, practitioners typically report the sample ranks of the point estimates with no associated measure of uncertainty.  However, as shown in~\cite{hall2010variability} and~\cite{zuk2007uncertainty}, sample ranks are highly sensitive to measurement noise, so simply reporting a point estimate can vastly overstate how certain we are regarding its value.

Although the need to consider the joint uncertainty of parameter estimates has been recognized since~\cite{louis1984estimPop}, the vast majority of the literature to date has been focused on confidence intervals for individual ranks.  Frequentist methods in the literature fall into two categories: confidence intervals based on the bootstrap and other computational methods~\citep{goldstein1996tables, hall2009bootstrap, marshall1998tables, wright2014ranking, xie2009ties, zhang2014rankCIs}, and confidence intervals based on differences in means~\citep{bie2013ranks, holm2013ranks, mohamad2018cis, lemmers2007ranks}.

The first method to produce a joint measure of uncertainty for sample ranks was introduced in~\cite{klein2020jointCR}.  The key insight there is that the uncertainty in the ranks is determined by the uncertainty in the parameters.  The authors use that insight to build a joint confidence region for the true ranking in terms of simultaneous confidence intervals for the individual ranks.  In this paper, we build on that insight and give a full characterization of the relationship between parameter uncertainty and rank uncertainty, and use that characterization to develop a theory of uncertainty in ranking.

We will illustrate our main idea with a simple example.  Figure~\ref{fig_simple} displays interval estimates for three distinct parameters $\theta_1, \theta_2, \theta_3$.  Assuming that each interval contains the true value of its parameter, we can conclude that $\theta_1 < \theta_3$, but we cannot conclude anything about the rank of $\theta_2$.  We may have $\theta_2 < \theta_1 < \theta_3$, $\theta_1 < \theta_2 < \theta_3$ or $\theta_1 < \theta_3 < \theta_2$.

\begin{figure}[t]
\centering
\begin{tikzpicture}[x=1pt,y=1pt]
\definecolor{fillColor}{RGB}{255,255,255}
\path[use as bounding box,fill=fillColor,fill opacity=0.00] (0,0) rectangle (289.08,144.54);
\begin{scope}
\path[clip] (  0.00,  0.00) rectangle (289.08,144.54);
\definecolor{drawColor}{RGB}{255,255,255}
\definecolor{fillColor}{RGB}{255,255,255}

\path[draw=drawColor,line width= 0.6pt,line join=round,line cap=round,fill=fillColor] (  0.00,  0.00) rectangle (289.08,144.54);
\end{scope}
\begin{scope}
\path[clip] (  8.25,  8.25) rectangle (283.58,139.04);
\definecolor{fillColor}{gray}{0.92}

\path[fill=fillColor] (  8.25,  8.25) rectangle (283.58,139.04);
\definecolor{drawColor}{RGB}{0,0,0}

\path[draw=drawColor,line width= 0.6pt,line join=round] ( 20.77, 14.20) -- (128.04, 14.20);

\path[draw=drawColor,line width= 0.6pt,line join=round] ( 92.28, 73.64) -- (199.55, 73.64);

\path[draw=drawColor,line width= 0.6pt,line join=round] (163.79,133.10) -- (271.06,133.10);
\end{scope}
\end{tikzpicture}
\caption{Interval estimates for three parameters.}
\label{fig_simple}
\end{figure}
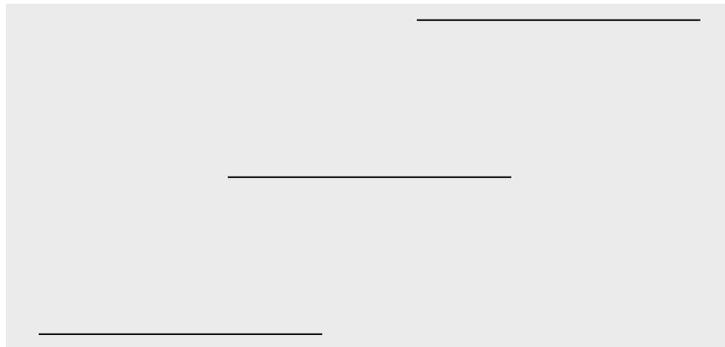

We have a set of rankings that we cannot rule out given the data, and it is natural to take the proportion of possible rankings that belong to this set as a measure of the uncertainty in our ranking.  Measuring this proportion involves counting linear extensions of a particular partial order, a problem with applications in fields such as automated planning~\citep{muise2016planRelax}, Bayesian statistics~\citep{niinimaki2016structDiscov}, causal inference~\citep{wallace1996causal}, event mining~\citep{mannila2000sequential}, minimum-comparison sorting~\citep{peczarski2004sorting}, non-parametric statistics~\citep{morton2009convexRank}, and preference reasoning~\citep{lukasiewicz2014probPref}.  The partial order that we consider was used implicitly in~\cite{klein2020jointCR} to define a joint confidence region for sample rankings and explicitly in~\cite{soliman2010ranking} for improved point estimates of sample ranks in a Bayesian context.  Our contribution consists of showing how this order can be used to characterize the relationship between parameter uncertainty and rank uncertainty and developing methods based on this relationship.

The remainder of this paper is laid out as follows.  In section~\ref{sec_bkgnd}, we give the necessary combinatorial background for further discussion.  In section~\ref{sec_theory}, we characterize the relationship between parameter uncertainty and rank uncertainty, give a set estimate for the true ranking, and discuss issues related to measuring the uncertainty in a ranking.  In section~\ref{sec_other}, we discuss some computations related to rankings.  In section~\ref{sec_freq}, we discuss the frequentist properties of our set estimate and compare our set estimate against that found in~\cite{klein2020jointCR}.  In section~\ref{sec_viz}, we discuss how to visualize the uncertainty in a set of rankings.  In section~\ref{sec_sim}, we perform experiments with simulated data to study the behavior of our methods.  Finally in section~\ref{sec_data}, we apply our methods to data from the American Community Survey.

\section{Notation and Definitions}
\label{sec_bkgnd}

In this section we define the notation we will use and give a brief introduction to general relations and partial orders.  We refer the reader to~\cite{gallier2011discrete} and~\cite{davey2002order} respectively for more information on these topics.

We assume that the reader is comfortable with elementary graph theory and algorithm analysis.  We recommend~\cite{bondy2008graphTheory} and~\cite{cormen2009algorithms} as respective introductions to these subject.

\subsection{General Notation}

$[n]$ denotes the set $\{1, \dots, n\}$.  $[]$ denotes an empty list.  Given an interval $I$, we use $\ell(I)$ and $r(I)$ to denote its left and right endpoints respectively.

Given a data set $x_1, \dots, x_n$, the rank of $x_j$ is denoted by $r_j(x_1, \dots, x_n)$ and is equal to $\sum_{i = 1}^n \ind{x_i \leq x_j}$.  The vector whose $j$th entry is $r_j(x_1, \dots, x_n)$ is referred to as a ranking and is denoted by $\mathbf{r}(x_1, \dots, x_n)$.  When there is no potential for confusion, we will elide the data and simply denote a ranking as $\mathbf{r}$.

\subsection{Relations}

Given a set $S$, a binary relation $\RR$ on $S$ is any subset of $S \times S$.  We write $s\,\RR\,t$ as a shorthand for $(s, t) \in \RR$.  If $T \subseteq S$, the restriction of $\RR$ to $T$ is denoted by $\RR {\restriction_T}$ and defined as $\RR \cap (T \times T)$.

If $s\,\RR\,s$ for all $s \in S$, we say that $\RR$ is reflexive.  If $s\,\RR\,t$ and $t\,\RR\,s$ implies that $s = t$, we say that $\RR$ is antisymmetric.  If $s\,\RR\,t$ and $t\,\RR\,u$ implies that $s\,\RR\,u$, we say that $\RR$ is transitive.  Finally, if $s\,\RR\,t$ or $t\,\RR\,s$ for all $s, t \in S$, we say that $\RR$ is total.

The transitive closure of a relation $\RR$ is the smallest transitive relation $\RR_T$ such that $\RR \subseteq \RR_T$.  The transitive reduction of $\RR$ is the smallest relation $\RR_t$ such that the transitive closure of $\RR_t$ is equal to the transitive closure of $\RR$.

\subsection{Partially Ordered Sets}

A partially ordered set (or poset) is an ordered pair $(S, \preceq)$ such that $\preceq$ is a reflexive, antisymmetric and transitive relation on $S$.  $S$ is referred to as the ground set, and $\preceq$ is referred to as a partial order on $S$.  A linear order is a total partial order.

We define the dual of $\preceq$ to be the order $\dual{\preceq}$ such that $s \dual{\preceq} t$ if and only if $t \preceq s$.  The poset $(S, \dual{\preceq})$ is referred to as the dual of $(S, \preceq)$.

We write $s\prec t$ when $s \preceq t$ and $s \neq t$.  The relation $\prec$ is referred to as a strict partial order.  Given a strict partial order $\prec$, we can define a partial order $\preceq$ by $s \preceq t$ whenever $s \prec t$ or $s = t$.  If $s \not\preceq t$ and $t \not\preceq s$, we say that $s$ and $t$ are incomparable and write $s \| t$.

Suppose that $\preceq_1$ and $\preceq_2$ are partial orders on some set $S$.  We say that $\preceq_2$ is an extension of $\preceq_1$ if $s \preceq_1 t$ implies $s \preceq_2 t$.  A linear extension of a partial order $\preceq$ is a linear order $\leqslant$ such that $\leqslant$ is an extension of $\preceq$.  The set of linear extensions of $\preceq$ is denoted by $\LE{\preceq}$.

$T \subseteq S$ is a down-set if $t \in T$ and $s \preceq t$ implies that $s \in T$.  $T$ is a principal down-set if there is some $s \in S$ such that $T = \{x \in S \mid x \preceq s\}$.  In this case we write $T = \downset{s}$.  The predecessor set of $s$, written as $\predset{s}$, is equal to $\downset{s} \setminus \{s\}$.  An up-set of a poset is defined as a down-set in the dual poset, and a principal up-set is written as $\upset{s}$.  Likewise, the successor set of $s$, written as $\succset{s}$, is its predecessor set in the dual poset.  An interval $[s, t]$ is defined to be $\upset{s} \cap \downset{t}$.

$s$ is a minimal element of a poset if $t \preceq s$ implies that $s = t$ and a minimum if $s \preceq t$ for all $t$.  In a linearly ordered set every minimal element is a minimum but this is not true for general posets.

The reachability relation of a graph is the transitive closure of its edges.  The reachability relation of any directed acyclic graph $G$ is a partial order which we will denote $\preceq_G$.  Furthermore, every partial order can be realized as the reachability relation of some directed acyclic graph.  If $\preceq$ is a partial order and $G$ is a graph whose reachability relation is $\preceq$, we say that $G$ is a graph of $\preceq$.  The cover graph of a partial order $\preceq$ is the graph whose set of edges is equal to the transitive reduction of the reachability relation of any graph of $\preceq$.

We will also be interested in a matrix representation of a poset.  This a matrix $M \in \{0, 1\}^{|S| \times |S|}$ such that $m_{ij} = \ind{s_i \preceq s_j}$.

Many of the most interesting results from order theory apply to posets of arbitrary size.  In this paper, however, we will only be interested in finite posets.

\section{From Parameter Uncertainty to Rank Uncertainty}
\label{sec_theory}

In this section we discuss the relationship between parameter uncertainty and rank uncertainty.  We first show how to define rank uncertainty in terms of parameter uncertainty, and then we discuss computational issues related to measuring rank uncertainty.

\subsection{Defining Rank Uncertainty}

We consider the problem of estimating the ranks of several unknown real-valued parameters $\theta_1, \dots, \theta_p$.  $\mathbf{r}(\hat{\theta}_1, \dots, \hat{\theta}_p)$ is a point estimate of $\mathbf{r}(\theta_1, \dots, \theta_p)$, but as always we wish to have some measure of its uncertainty.

In any realistic context, we will have some measure of the uncertainty in each point estimate.  These are typically presented as interval estimates $\hat{I}_1, \dots, \hat{I}_p$.  We will show that these interval estimates also contain information about the uncertainty in the ranking of the parameters.

In the presence of ties in the parameter values, the sample ranks are not a consistent estimator of the population ranks~\citep{xie2009ties}, so we only consider the case where $\theta_{j_1} \neq \theta_{j_2}$ when $j_1 \neq j_2$.  We additionally assume that no two interval estimates share a common endpoint, that that $\hat{\theta}_j \in \hat{I}_j$ for all $j$ and that each $\hat{I}_j$ has positive length.

We say that a ranking $\mathbf{r}$ is compatible with the data if there is a set of points $x_1, \dots, x_p$ such that $x_j \in \hat{I}_j$ for all $j$ and $\mathbf{r} = \mathbf{r}(x_1, \dots, x_p)$.  The sample ranks of the point estimates are compatible with the data, but other rankings are as well unless the interval estimates are pairwise disjoint.  Because we assume that $\theta_{j_1} \neq \theta_{j_2}$ when $j_1 \neq j_2$, we are guaranteed that $\mathbf{r}(\theta_1, \dots, \theta_p)$ has $p$ distinct values, so we only need to consider the $p!$ possible permutations of $[p]$.

We define our set estimator for the true ranking to be the set of rankings which are compatible with the data.   Working directly with the definition of this set is challenging, but its size is analogous to the width of an interval estimate.  The range of possible sizes grows with the number of parameters we wish to estimate, but we can always normalize it by dividing by the number of all possible rankings.  We therefore take this ratio to be our measure of uncertainty.

In order to be able to work with this measure of uncertainty, we will give a characterization of compatibility with the data in terms of order relations.  For this, we will need to define two orders.

We first consider the order generated by a ranking.  For any ranking $\mathbf{r}$, we define $\leqslant_\mathbf{r}$ by $j_1 \leqslant_\mathbf{r} j_2$ whenever $r_{j_1} \leq r_{j_2}$.

We then consider the order generated by a set of intervals.  Given a set of intervals $\mathcal{I}$, we will follow~\cite{klein2020jointCR} and~\cite{soliman2010ranking} and define $\prec_\mathcal{I}$ by $j_1 \prec_\mathcal{I} j_2$ whenever $r(I_{j_1}) < \ell(I_{j_2})$.

In the remainder of this paper, $\mathcal{I}$ will be assumed to be a set of interval estimates for the parameters of interest.  $\leqslant_\mathbf{r}$ and $\preceq_\mathcal{I}$ are then orders on $[p]$.  The relationship between $\leqslant_\mathbf{r}$, $\preceq_\mathcal{I}$ and compatibility in the data is given in Theorem~\ref{thm_lin_ext}, which follows from Lemma~\ref{lem_lin_ext_fwd} and Lemma~\ref{lem_lin_ext_bwd}.

\begin{lemma}
\label{lem_lin_ext_fwd}
If a ranking $\mathbf{r}$ is compatible with the data, then $\leqslant_\mathbf{r}$ is a linear extension of $\preceq_\mathcal{I}$.
\end{lemma}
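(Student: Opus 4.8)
The plan is to unpack both definitions and check the extension property directly. Suppose $\mathbf{r}$ is compatible with the data, so there exist points $x_1, \dots, x_p$ with $x_j \in \hat{I}_j$ for each $j$ and $\mathbf{r} = \mathbf{r}(x_1, \dots, x_p)$. Since the $x_j$ are witnesses to a ranking with $p$ distinct values (the $x_j$ are pairwise distinct as a ranking, though I should be a touch careful: a ranking in the sense used here is a permutation of $[p]$ only when the $x_j$ are distinct — if two $x_j$ coincide the ``ranking'' would have a repeated value and fail to be a permutation; so compatibility as defined here implicitly forces the $x_j$ distinct, and I would note this), $\leqslant_\mathbf{r}$ is a linear order on $[p]$: for any $j_1 \neq j_2$ exactly one of $x_{j_1} < x_{j_2}$, $x_{j_2} < x_{j_1}$ holds, hence exactly one of $r_{j_1} < r_{j_2}$, $r_{j_2} < r_{j_1}$, so $\leqslant_\mathbf{r}$ is total and antisymmetric; reflexivity and transitivity are immediate from those of $\leq$ on the reals.

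The substance is showing $\leqslant_\mathbf{r}$ extends $\preceq_\mathcal{I}$, i.e. that $j_1 \prec_\mathcal{I} j_2$ implies $j_1 \leqslant_\mathbf{r} j_2$ (and then $j_1 <_\mathbf{r} j_2$, since the two indices are distinct once $\prec_\mathcal{I}$ holds). Assume $j_1 \prec_\mathcal{I} j_2$, which by definition means $r(\hat{I}_{j_1}) < \ell(\hat{I}_{j_2})$. Then
\[
x_{j_1} \leq r(\hat{I}_{j_1}) < \ell(\hat{I}_{j_2}) \leq x_{j_2},
\]
so $x_{j_1} < x_{j_2}$, whence $r_{j_1} = \sum_{i} \ind{x_i \leq x_{j_1}} < \sum_{i} \ind{x_i \leq x_{j_2}} = r_{j_2}$ — strictly, because $x_{j_1} \leq x_{j_1} < x_{j_2}$ means $j_1$ is counted in $r_{j_2}$ but $j_2$ is not counted in $r_{j_1}$, while every index counted in $r_{j_1}$ is counted in $r_{j_2}$. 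Hence $j_1 \leqslant_\mathbf{r} j_2$, which is what we needed. Combining, $\leqslant_\mathbf{r}$ is a linear order extending $\preceq_\mathcal{I}$, i.e. a linear extension.

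The only place requiring any care is the bookkeeping with the counting definition of rank — making sure the inequality $r_{j_1} < r_{j_2}$ comes out \emph{strict} from $x_{j_1} < x_{j_2}$ when all the $x_i$ are distinct — and the preliminary observation that compatibility forces the $x_i$ distinct so that $\leqslant_\mathbf{r}$ is genuinely a linear (total) order rather than merely a preorder; neither is a real obstacle. I expect the proof to be short.
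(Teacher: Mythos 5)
Your proposal is correct and follows essentially the same route as the paper's proof: take any pair with $j_1 \prec_\mathcal{I} j_2$, use $r(\hat{I}_{j_1}) < \ell(\hat{I}_{j_2})$ together with $x_{j_1} \in \hat{I}_{j_1}$ and $x_{j_2} \in \hat{I}_{j_2}$ to get $x_{j_1} < x_{j_2}$, and conclude $j_1 \leqslant_\mathbf{r} j_2$. You supply some details the paper leaves implicit (that $\leqslant_\mathbf{r}$ is genuinely a linear order because compatibility with a permutation forces the $x_j$ to be distinct, and the strictness of $r_{j_1} < r_{j_2}$), and you dispense with the paper's unnecessary case split on whether $\preceq_\mathcal{I}$ is an antichain, but the core argument is identical.
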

\begin{proof}
Since $\mathbf{r}$ is compatible with the data, there are points $x_1, \dots, x_n$ such that $x_j \in \hat{I}_j$ for all $j$ and $\mathbf{r} = \mathbf{r}(x_1, \dots, x_n)$.  If $j_1 \| j_2$ for all $j_1, j_2$ then $\leqslant_\mathbf{r}$ is a linear extension of $\preceq_\mathcal{I}$ because every linear order on $[p]$ is a linear extension of $\preceq_\mathcal{I}$.  Otherwise there are $j_1, j_2$ such that $j_1 \prec_\mathcal{I} j_2$.  Then $r(\hat{I}_{j_1}) < \ell(\hat{I}_{j_2})$, and this guarantees that $x_{j_1} < x_{j_2}$.  Therefore $j_1 \leqslant_\mathbf{r} j_2$.  Since this holds for every $j_1, j_2$ with $j_1 \prec_\mathcal{I} j_2$, $\leqslant_\mathbf{r}$ is a linear extension of $\preceq_\mathcal{I}$.
\end{proof}

\begin{lemma}[\cite{trotter1997intervalOrder}]
\label{lem_lin_ext_bwd}
If $\leqslant_\mathbf{r}$ is a linear extension of $\preceq_\mathcal{I}$, then $\mathbf{r}$ is compatible with the data.
\end{lemma}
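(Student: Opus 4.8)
The plan is to reverse the construction used in the proof of Lemma~\ref{lem_lin_ext_fwd}: from a ranking $\mathbf{r}$ whose induced order $\leqslant_\mathbf{r}$ extends $\preceq_\mathcal{I}$, I would explicitly build points $x_1, \dots, x_p$ with $x_j \in \hat{I}_j$ and $\mathbf{r}(x_1, \dots, x_p) = \mathbf{r}$. Since $\mathbf{r}$ is one of the $p!$ permutations of $[p]$, it costs nothing to relabel the parameters so that $r_j = j$; then $\leqslant_\mathbf{r}$ is the usual order on $[p]$ and the goal reduces to producing $x_1 < x_2 < \dots < x_p$ with $x_k \in \hat{I}_k$ for every $k$.

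The crux is to restate the hypothesis as a workable inequality on endpoints. By definition $\leqslant_\mathbf{r}$ extends $\preceq_\mathcal{I}$ exactly when $r(\hat{I}_{j_1}) < \ell(\hat{I}_{j_2})$ forces $r_{j_1} < r_{j_2}$, i.e.\ (in the relabelled indices) $j_1 < j_2$. Taking the contrapositive, for $i \le j$ we cannot have $r(\hat{I}_j) < \ell(\hat{I}_i)$; since no two interval estimates share an endpoint and each has positive length, this sharpens to $\ell(\hat{I}_i) < r(\hat{I}_j)$ for all $i \le j$. So every left endpoint among the first $j$ intervals lies strictly below the right endpoint of $\hat{I}_j$.

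Given this, I would take the ``leftmost feasible'' representatives. Put $a_k = \max_{i \le k} \ell(\hat{I}_i)$; the reformulated hypothesis gives $\ell(\hat{I}_k) \le a_k < r(\hat{I}_k)$, so $a_k$ lies in $\hat{I}_k$, and $a_1 \le a_2 \le \dots \le a_p$. These need not increase strictly, so I would perturb: pick $\epsilon > 0$ with $p\epsilon < \min_k\bigl(r(\hat{I}_k) - a_k\bigr)$ (positive, being a minimum of finitely many positive numbers) and set $x_k = a_k + k\epsilon$. Then $\ell(\hat{I}_k) \le a_k < x_k < r(\hat{I}_k)$, so $x_k \in \hat{I}_k$, while $x_{k+1} - x_k = (a_{k+1} - a_k) + \epsilon \ge \epsilon > 0$, so $x_1 < \dots < x_p$. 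Hence $x_k$ has rank $k = r_k$ among $x_1, \dots, x_p$, giving $\mathbf{r}(x_1, \dots, x_p) = \mathbf{r}$, so $\mathbf{r}$ is compatible with the data. (This is the finite, down-to-earth special case of Trotter's interval representation result.)

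I expect the only genuine obstacle to be the reformulation in the second paragraph: seeing that ``$\leqslant_\mathbf{r}$ is a linear extension of $\preceq_\mathcal{I}$'' is equivalent to $\ell(\hat{I}_i) < r(\hat{I}_j)$ for all $i \le j$, and recognizing that this is precisely the condition that makes the greedy leftmost choice available at every index. Everything after that is bookkeeping, with the positive-length and distinct-endpoint assumptions doing exactly the work needed to turn the weak inequalities into the strict ones a tie-free ranking requires.
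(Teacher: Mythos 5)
Your proof is correct. Note first that the paper does not actually prove this lemma at all: it imports it as a known result from Trotter's survey of interval orders, so there is no internal argument to compare against. What you have supplied is a self-contained, elementary replacement for that citation, and every step checks out: the relabelling so that $r_j = j$ is harmless because $\mathbf{r}$ is a permutation; the contrapositive of the extension hypothesis, combined with the standing assumptions of positive length and pairwise distinct endpoints, does give the strict inequality $\ell(\hat{I}_i) < r(\hat{I}_j)$ for all $i \le j$; the running maximum $a_k = \max_{i \le k} \ell(\hat{I}_i)$ then satisfies $\ell(\hat{I}_k) \le a_k < r(\hat{I}_k)$ because $a_k$ equals $\ell(\hat{I}_{i^*})$ for some $i^* \le k$; and the perturbation $x_k = a_k + k\epsilon$ with $p\epsilon < \min_k\bigl(r(\hat{I}_k) - a_k\bigr)$ lands each $x_k$ strictly inside $\hat{I}_k$ while forcing $x_1 < \cdots < x_p$, which is exactly what compatibility with $\mathbf{r}$ requires. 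Your ``leftmost feasible point plus perturbation'' construction is essentially the standard realization argument from interval-order theory specialized to the finite case, so it buys the reader a two-paragraph verification in place of an external reference, at no loss of generality for the setting of this paper; the only thing the citation to Trotter buys in exchange is the broader context of interval representations that the paper later leans on (e.g., the canonical integer representation of \cite{mitas1994semiRep}).
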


\begin{theorem}
\label{thm_lin_ext}
A ranking $\mathbf{r}$ is compatible with the data if and only if $\leqslant_\mathbf{r}$ is a linear extension of $\preceq_\mathcal{I}$.
\end{theorem}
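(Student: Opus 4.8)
The plan is to obtain Theorem~\ref{thm_lin_ext} as the conjunction of the two preceding lemmas, since between them they establish both implications of the stated biconditional and no further argument is required.

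First I would dispatch the forward implication. If $\mathbf{r}$ is compatible with the data, then Lemma~\ref{lem_lin_ext_fwd} says directly that $\leqslant_\mathbf{r}$ is a linear extension of $\preceq_\mathcal{I}$, so there is nothing to add. It is worth recalling that the content of that lemma is the elementary observation that a relation $j_1 \prec_\mathcal{I} j_2$ forces any witnessing points to satisfy $x_{j_1} < x_{j_2}$, hence $r_{j_1} < r_{j_2}$; and that totality of $\leqslant_\mathbf{r}$ is automatic because a compatible ranking is a permutation of $[p]$ and therefore has $p$ distinct entries, which also makes $\leqslant_\mathbf{r}$ antisymmetric.

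For the converse, if $\leqslant_\mathbf{r}$ is a linear extension of $\preceq_\mathcal{I}$, then Lemma~\ref{lem_lin_ext_bwd} — the interval-order representation result attributed to \cite{trotter1997intervalOrder} — produces points $x_1, \dots, x_p$ with $x_j \in \hat{I}_j$ for all $j$ and $\mathbf{r}(x_1, \dots, x_p) = \mathbf{r}$, which is exactly the definition of $\mathbf{r}$ being compatible with the data. Combining the two directions yields the claimed equivalence.

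The only genuine mathematical obstacle in this development is the backward lemma, whose proof must show that \emph{every} linear extension of the interval order $\preceq_\mathcal{I}$ is realized by an actual choice of representatives inside the intervals; but since we are entitled to cite that result, the proof of the theorem itself is merely the assembly of the two lemmas and introduces no additional difficulty. I would close with a short remark noting which of the standing assumptions (distinct parameter values, no shared interval endpoints, positive interval lengths) are doing work here, so the reader sees that the equivalence is exactly the bridge needed to transfer questions about the set of compatible rankings to questions about $\LE{\preceq_\mathcal{I}}$.
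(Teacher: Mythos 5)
Your proposal matches the paper exactly: the paper states that Theorem~\ref{thm_lin_ext} follows immediately from Lemma~\ref{lem_lin_ext_fwd} (forward direction) and Lemma~\ref{lem_lin_ext_bwd} (converse, cited from \cite{trotter1997intervalOrder}), and gives no further argument. Your assembly of the two lemmas, with the converse carried entirely by the cited interval-order representation result, is precisely the intended proof.
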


In light of Theorem~\ref{thm_lin_ext}, we will denote the set of rankings compatible with the data as $\LE{\preceq_\mathcal{I}}$.  Our measure of uncertainty will then be $|\LE{\preceq_\mathcal{I}}|/p!$.

All of the theory and methods we will develop rely on the assumption that no two intervals share a common endpoint.  Although this will often hold in practice, it will not always be satisfied, and it is reasonable to worry that we will need methods that do not rely on this assumption.  However, this turns out not to be an issue.

The key insight here is that the statistic we are interested in is $\preceq_\mathcal{I}$ rather than $\mathcal{I}$.  If $\mathcal{I}^{'}$ is a set of intervals such that $\preceq_\mathcal{I} = \preceq_{\mathcal{I}^{'}}$, then any computation on $\mathcal{I}$ that depends only on $\preceq_\mathcal{I}$ will yield the same result when we perform it on $\mathcal{I}^{'}$.  Therefore, if we can find a set of intervals $\mathcal{I}^{'}$ such that $\preceq_\mathcal{I} = \preceq_{\mathcal{I}^{'}}$ and $\mathcal{I}^{'}$ has distinct endpoints, then we can use $\mathcal{I}^{'}$ in place of $\mathcal{I}$ to avoid issues related to common endpoints.

A procedure to find such an $\mathcal{I}^{'}$ is given in~\cite{madej1991interval}.  While we can apply this procedure to any set of intervals, for numerical reasons we prefer to start with a set of intervals with integer endpoints.  As shown in~\cite{mitas1994semiRep}, $\preceq_\mathcal{I}$ has a canonical representation with \[I_j = [|\{\predset{j^{'}} \mid \predset{j^{'}} \subset \predset{j}\}|, |\{\succset{j^{'}} \mid \succset{j} \subset \succset{j^{'}}\}|]\] for all $j \in [p]$.  Given a set of intervals $\mathcal{I}$, we can compute this canonical representation and then distinguish the endpoints to produce a set of intervals with distinct endpoints that generates $\preceq_\mathcal{I}$.  Because we need to be able to run this procedure on sets of intervals where the endpoints are not distinct, we requires a matrix representation of $\preceq_\mathcal{I}$.  The time and space complexity are therefore both $O(p^2)$.

\subsection{Measuring Rank Uncertainty}

In light of Theorem~\ref{thm_lin_ext}, we can measure the uncertainty in the ranks of a collection of estimated parameters by counting the linear extensions of the order generated by their interval estimates.  However, this is more easily said than done.  As shown in~\cite{brightwell1991counting}, counting linear extensions of a partial order is intractable\footnote{More precisely, the problem of counting the linear extensions of a given partial order is $\sharpp$-complete.  A discussion of what this means is beyond the scope of this paper.  See~\cite{arora2009complexity} for the relevant definitions.} in general.  Fortunately, there are a number of algorithms available for efficiently approximating the number of linear extensions of a given partial order.  We refer the reader to~\cite{talvitie2018countInPractice} and references therein for details.

Here we introduce the idea of partitioning $\preceq_I$.  This is a method that allows us to count its linear extensions by counting the linear extensions of several smaller orders and multiplying the results.  To state this idea, we will need the notion of an order partition.

An \emph{order partition} of a poset $(S, \preceq)$ is a partition $S_1, \dots, S_k$ of $S$ with the property that $s_{j_1} \prec s_{j_2}$ if and only if $j_1 < j_2$ for all $s_{j_1} \in S_{j_1}$, $s_{j_2} \in S_{j_2}$.  We say that an order partition is maximal if no $S_j$ can be further partitioned.

If we have an order partition of $(S, \preceq)$, then the number of linear extensions of $\preceq$ is equal to the product of the number of linear extensions of $\preceq$ restricted to each component of the partition.  While this does not reduce the asymptotic complexity of counting linear extensions, it can bring the running time down significantly for some posets.

We are not aware of an algorithm for computing a maximal order partition for a general poset.  However, we can compute an order partition of $([p], \preceq_I)$ with Algorithm~\ref{algo_part}.  We prove the correctness of this algorithm and analyze its complexity in Theorem~\ref{thm_algo_part}.

\begin{theorem}
\label{thm_algo_part}
Algorithm~\ref{algo_part} computes a maximal order partition of $([p], \preceq_I)$.  Its time complexity is $O(p\log(p))$ and its space complexity is $O(p)$.
\end{theorem}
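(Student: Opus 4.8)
The plan is to show that the blocks Algorithm~\ref{algo_part} produces are exactly the ``union components'' of the interval estimates $I_1,\dots,I_p$, to identify these components with the finest linear-sum decomposition of $([p],\preceq_I)$, and then to read off the order-partition property, maximality, and the complexity bounds. The factorization $|\LE{\preceq_I}| = \prod_j |\LE{\preceq_I\restriction_{S_j}}|$ that motivates the whole construction holds precisely because, when each block sits entirely below the next, $\preceq_I$ is a linear sum of the blocks: a linear extension must exhaust one block before it can begin the next, so it is determined by an independent choice of linear extension inside each block. Hence it suffices to determine which ordered partitions of $[p]$ satisfy the defining condition of an order partition of $\preceq_I$, and which of those is finest.

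The structural core is a characterization of the ``cuts'' of $([p],\preceq_I)$: the sets $D$ with $\emptyset\neq D\subsetneq[p]$ for which $(D,[p]\setminus D)$ is a two-block order partition. Since $j_1\prec_I j_2$ exactly when $r(I_{j_1})<\ell(I_{j_2})$, such a $D$ is a cut iff $\max_{j\in D} r(I_j) < \min_{j\notin D}\ell(I_j)$, and --- using that the $2p$ endpoints are distinct, so a value-region of constant coverage is an open interval avoiding all endpoints --- this holds iff there is a real $t$ contained in no interval with $D=\{j: r(I_j)<t\}$. Consequently the cuts are totally ordered by inclusion and correspond bijectively to the ``gaps'' of $\bigcup_j I_j$; listing all of them, $\emptyset=D_0\subsetneq D_1\subsetneq\cdots\subsetneq D_m=[p]$, and taking consecutive differences $S_i=D_i\setminus D_{i-1}$ gives the finest linear-sum decomposition. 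Each $S_i$ is the set of intervals lying strictly between two consecutive points at which the coverage counter returns to $0$ --- the counter that starts at $0$, increases by one at each left endpoint and decreases by one at each right endpoint as the sorted endpoint list is swept --- equivalently, a maximal set of intervals joined through a chain of overlaps. This is exactly what Algorithm~\ref{algo_part} emits.

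Granting this, correctness is routine. If $j_1\in S_a$ and $j_2\in S_b$ with $a<b$, then $r(I_{j_1})$ lies below the separating gap value $t$ and $\ell(I_{j_2})$ lies above it, so $j_1\prec_I j_2$; and no relation can run the other way, since $j_1\prec_I j_2$ together with $j_2\prec_I j_1$ would force $r(I_{j_1})<\ell(I_{j_2})\le r(I_{j_2})<\ell(I_{j_1})\le r(I_{j_1})$. For maximality, within a single block the union of its intervals is one gap-free interval, so the coverage counter never returns to $0$ strictly inside it; hence $\preceq_I\restriction_{S_j}$ has no cut and admits no nontrivial order partition. For the complexity: sorting the $2p$ endpoints costs $O(p\log p)$, the single sweep that maintains the counter and emits the blocks costs $O(p)$, and only $O(p)$ storage is used.

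The step I expect to be the main obstacle is the structural lemma of the second paragraph: pinning down that \emph{every} order-partition cut of $\preceq_I$ arises from a zero of the coverage function --- so that the algorithm misses none of them, which is exactly what maximality demands --- and doing so with enough care that the distinct-endpoints hypothesis is genuinely invoked where the gap argument needs an endpoint-free open interval. Once that correspondence is in hand, both the order-partition property and maximality fall out immediately, and the running-time claim is the standard sweep-line accounting.
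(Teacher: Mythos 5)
Your proof is correct and takes essentially the same route as the paper's: both rest on the observation that the algorithm's cut points---where the running maximum of right endpoints falls below every remaining left endpoint---are precisely the places where an order-partition boundary can occur. Your correspondence between cuts and gaps of $\bigcup_j I_j$ simply makes rigorous the maximality step that the paper dispatches with a one-line ``simply observe,'' so it is the same argument, spelled out in more detail.
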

\begin{proof}
We must first show that Algorithm~\ref{algo_part} produces an order partition of $([p], \preceq_I)$.  In this case, an order partition is a collection of sets $S_1, \dots, S_k$ such that $r(\hat{I}_{j_1}) < \ell(\hat{I}_{j_2})$ whenever $j_1 \in S_{j_1}, j_2 \in S_{j_2}$ and $j_1 < j_2$.  By construction, the subsets of $[p]$ inserted into $P$ have this property.  To see that the output is a maximal order partition, simply observe that the algorithm will find all cases where the remaining left endpoint are greater than the greatest right endpoint observed so far.

The running time of Algorithm~\ref{algo_part} is dominated by the cost of sorting the endpoints.  We need to store the output and a representation of $\sigma$, so the total space is $O(p)$.
\end{proof}

\begin{algorithm}[t]
\begin{algorithmic}
\Procedure{PartitionOrder}{$\hat{I}_1, \dots, \hat{I}_p$}
\State Sort $\hat{I}_1, \dots, \hat{I}_p$ by their left endpoints
\State Let $\sigma$ be the permutation that maps $\hat{I}_1, \dots, \hat{I}_p$ to their sorted indices
\State $j_* \gets 1$
\State $P \gets []$
\State $r \gets -\infty$
\For{$j \in [p]$}
\State $r \gets \max(r, r(\hat{I}_j))$
\If{$j < p$ and $r < \ell(\hat{I}_{j + 1})$}
\State Append $\{\sigma(j_*), \dots, \sigma(j)\}$ to $P$
\State $j_* \gets j + 1$
\EndIf
\EndFor
\State Append $\{\sigma(j_*), \dots, \sigma(|S|)\}$ to $P$
\State $\return{P}$
\EndProcedure
\end{algorithmic}
\caption{Partition a set of intervals into subsets of disjoint index.}
\label{algo_part}
\end{algorithm}

\section{Other Questions of Interest}
\label{sec_other}

While $|\LE{\preceq_\mathcal{I}}|/p!$ is a measure of uncertainty in a ranking estimate, our inability to efficiently calculate it exactly presents us with some difficulty.  Here we consider two tractable questions related to $\LE{\preceq_\mathcal{I}}$ .  We first discuss how to determine whether a given ranking is an element of $\LE{\preceq_\mathcal{I}}$.  We then discuss which parameters are in the smallest $k$ of some ranking compatible with the data.

\subsection{Is a Given Ranking Compatible with the Data?}

\begin{algorithm}[p]
\begin{algorithmic}
\Procedure{DetermineCompatibility}{$\mathbf{r}, \hat{I}_1, \dots, \hat{I}_p$}
\State $P \gets \Call{Sort}{\{\ell(\hat{I}_1), r(\hat{I}_1), \dots, \ell(\hat{I}_p), r(\hat{I}_p)\}}$
\State $j_* \gets \min(\{j \mid P[j]$ is a right endpoint$\})$
\For{$j \in [p]$}
\If{$P[j_*] < \ell(\hat{I}_j)$}
\State $\return{\false}$
\Else
\State Mark $\ell(\hat{I}_j), r(\hat{I}_j)$ as deleted
\State $j_* \gets \min(\{j \mid P[j]$ is a right endpoint and not marked as deleted$\})$
\EndIf
\EndFor
\State $\return{\true}$
\EndProcedure
\end{algorithmic}
\caption{Decide whether a given ranking is compatible with the data.}
\label{algo_compat}
\end{algorithm}

We first consider the problem of determining whether a given ranking is compatible with the data.  Our method is given in Algorithm~\ref{algo_compat}, and we prove its correctness and analyze its complexity in Theorem~\ref{thm_algo_compat}.

\begin{theorem}
\label{thm_algo_compat}
Algorithm~\ref{algo_compat} correctly determines whether a given ranking is compatible with the data in time $O(p\log(p))$ and space $O(p)$.
\end{theorem}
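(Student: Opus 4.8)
The plan is to reduce the statement, via Theorem~\ref{thm_lin_ext}, to a single inequality about interval endpoints, and then to check that Algorithm~\ref{algo_compat} tests exactly that inequality within the claimed time and space.

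First I would translate compatibility into an order-theoretic condition. By Theorem~\ref{thm_lin_ext}, $\mathbf{r}$ is compatible with the data if and only if $\leqslant_\mathbf{r}$ is a linear extension of $\preceq_\mathcal{I}$. Since $\leqslant_\mathbf{r}$ is a linear order and $\mathbf{r}$ has $p$ distinct values, this fails exactly when there is a pair $j_1, j_2$ with $j_1 \prec_\mathcal{I} j_2$ (that is, $r(\hat{I}_{j_1}) < \ell(\hat{I}_{j_2})$) but $r_{j_2} < r_{j_1}$. Permuting indices so that $r_j = j$ for all $j$ --- which the pseudocode implicitly assumes and which costs $O(p\log p)$ --- the obstruction becomes the existence of $a > b$ with $r(\hat{I}_a) < \ell(\hat{I}_b)$; and since every interval has positive length (so the case $a = b$ contributes nothing), this is equivalent to: there is some $j \in [p]$ with $\min_{c \ge j} r(\hat{I}_c) < \ell(\hat{I}_j)$. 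So it suffices to show that Algorithm~\ref{algo_compat} returns $\false$ precisely when such a $j$ exists.

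Next I would establish the loop invariant that at the top of iteration $j$ the pointer $j_*$ satisfies $P[j_*] = \min_{c \ge j} r(\hat{I}_c)$. The base case $j = 1$ holds because $j_*$ is initialized to the position in the sorted list $P$ of the smallest right endpoint. For the inductive step, note that before iteration $j+1$ the endpoints marked deleted are exactly those of the intervals $1, \dots, j$, so the smallest undeleted right endpoint is $\min_{c \ge j+1} r(\hat{I}_c)$, which is the value $j_*$ is reset to. Granting the invariant, the test $P[j_*] < \ell(\hat{I}_j)$ performed at iteration $j$ is literally $\min_{c \ge j} r(\hat{I}_c) < \ell(\hat{I}_j)$; hence if some $j$ witnesses the obstruction the algorithm returns $\false$ at the smallest such $j$ (no test fires before it, so the invariant survives up to that point), and if no $j$ witnesses it then no test ever fires, the invariant is maintained through all $p$ iterations, and the loop exits with $\true$. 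Together with the reduction above, this gives correctness.

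For the resource bounds, sorting the $2p$ endpoints is $O(p\log p)$ and dominates, as does sorting the indices by rank; the list $P$, the deletion flags, and the relabeling permutation take $O(p)$ space. The only delicate point --- and the one I expect to be the main obstacle --- is the cost of the repeated recomputations of $j_*$. Here I would use an amortization argument: $\min_{c \ge j} r(\hat{I}_c)$ is nondecreasing in $j$, and the $2p$ endpoints are distinct, so $P[j_*]$ is nondecreasing and hence $j_*$ only ever advances forward through $P$; implementing each recomputation as a forward scan for the next undeleted right endpoint therefore visits each slot of $P$ only $O(1)$ times over the entire execution, for $O(p)$ work in total. Thus the loop adds only $O(p)$ on top of the sort, and the overall time is $O(p\log p)$ with $O(p)$ space, as claimed; the correctness half is, by contrast, essentially just the endpoint-minimum reformulation in the second paragraph.
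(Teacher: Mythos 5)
Your proof is correct and follows the same route as the paper's: reduce to Theorem~\ref{thm_lin_ext}, characterize failure of the linear-extension property by a ``bad pair'' of endpoints, and verify that the algorithm's test fires exactly when such a pair exists, with the sort dominating the running time. You supply details the paper leaves implicit --- the rank-order relabeling of the intervals, the loop invariant on $j_*$, and the amortized forward-scan argument for maintaining it --- but the underlying argument is the same.
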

\begin{proof}
By Theorem~\ref{thm_lin_ext}, $\mathbf{r}$ is compatible with the data if and only if $\leqslant_\mathbf{r}$ is a linear extension of $\preceq_\mathcal{I}$.

If $\leqslant_\mathbf{r}$ is a linear extension of $\preceq_\mathcal{I}$, then there is no pair $j_1, j_2$ such that $r(\hat{I}_{j_1}) < \ell(\hat{I}_{j_2})$ but $r_{j_1} > r_{j_2}$.  In that case the algorithm will never find a right endpoint to the left of the current interval's left endpoint.  As a result, it will return true.

If $\leqslant_\mathbf{r}$ is not a linear extension of $\preceq_\mathcal{I}$, then there is some pair $j_1, j_2$ such that $r(\hat{I}_{j_1}) < \ell(\hat{I}_{j_2})$ and $r_{j_1} > r_{j_2}$.  In that case the algorithm will find a right endpoint to the left of the current interval's left endpoint.  It will then return false.

We need to sort the endpoints to perform the algorithm, so the time complexity is at least $O(p\log(p))$.  We can find each endpoint through binary search, and in total this requires $O(p\log(p))$ steps.  Maintaining $j_*$ requires us to look at each element exactly once, so the total time complexity is $O(p\log(p))$.

We need to store the list $P$, a bit for each interval denoting whether its endpoints have been deleted, a map of the locations of the endpoints of each interval in $P$, and the index $j_*$, so the total space required is $O(p)$.
\end{proof}

\subsection{Which Parameters Can Be in the Smallest \texorpdfstring{$k$}{k}?}

\begin{algorithm}[p]
\begin{algorithmic}
\Procedure{PrincipalDownSets}{$\hat{I}_1, \dots, \hat{I}_p$}
\State $P \gets \Call{Sort}{\{\ell(\hat{I}_1), r(\hat{I}_1), \dots, \ell(\hat{I}_p), r(\hat{I}_p)\}}$
\State $r \gets 1$
\For{$q \in P$}
\State Let $q$ be an endpoint of $\hat{I}_j$
\If{$q$ is a left endpoint}
\State $c_j \gets r$
\Else
\State $r \gets r + 1$
\EndIf
\EndFor
\State $\return{c}$
\EndProcedure
\end{algorithmic}
\caption{Compute the cardinalities of the principal down-sets of $\preceq_\mathcal{I}$.}
\label{algo_down_set}
\end{algorithm}

In many applications of ranking, we wish to select the $k$ smallest unknown parameters.  The standard answer to this question is $\{\hat{\theta}_{(j)} \mid j \leq k\}$, but this ignores the uncertainties in the point estimates.

If instead we assume that $\theta_j \in \hat{I}_j$ for all $j$, we can ask which values can be in the $k$ smallest.  In the example given in Figure~\ref{fig_simple}, the smallest value may be either $\theta_1$ or $\theta_2$, and the two smallest values may be $\{\theta_1, \theta_2\}$ or $\{\theta_1, \theta_3\}$.

However, we are not completely uncertain regarding the smallest values.  The claims that $\theta_3$ is the smallest value and that $\{\theta_2, \theta_3\}$ are the two smallest values are not compatible with the data because $r(\hat{I}_1) < \ell(\hat{I}_3)$.

To solve this problem, we need some new vocabulary.  First, we need the notion of an index.  If $(S, \leqslant)$ is a linearly ordered set and $s \in S$, the index of $s$ with respect to $\leqslant$ is defined to be $|\{t \in S \mid t \leqslant s\}|$ and is denoted $\idx_\leqslant(s)$.

We will also need the notion of a $k$-top set.  Given a partially ordered set $(S, \preceq)$ and $k \leq |S|$, we define the $k$-top set to be $\{j \mid \idx_\leqslant(j) \leq k$ for some linear extension $\leqslant$ of $\preceq\}$.

It is clear that $|\downset{s}| \leq \idx_\leqslant(s)$ for all $s \in S$.  Therefore, the $k$-top set is equal to $\{s \in S \mid |\downset{s}| \leq k\}$, and we can compute the $k$-top set of $(S, \preceq)$ by computing the cardinalities of the principal down-sets of $\preceq$ and selecting those that have at most $k$ elements.  Algorithm~\ref{algo_down_set} computes the cardinalities of these principal down-sets for the order generated by a set of intervals.  We prove the correctness of this algorithm in Theorem~\ref{thm_algo_down_set}.

\begin{theorem}
\label{thm_algo_down_set}
Algorithm~\ref{algo_down_set} computes the cardinalities of the principal down-sets of $\preceq_\mathcal{I}$ in time $O(p\log(p))$ and space $O(p)$.
\end{theorem}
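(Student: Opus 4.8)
The plan is to verify a loop invariant relating the counter $r$ to the number of right endpoints processed so far, and then read off the claimed values of $c_j$. First I would record, straight from the definition of $\preceq_\mathcal{I}$, that $k \preceq_\mathcal{I} j$ holds exactly when $k = j$ or $r(\hat{I}_k) < \ell(\hat{I}_j)$, so that
\[
\downset{j} = \{j\} \cup \{k \in [p] \mid r(\hat{I}_k) < \ell(\hat{I}_j)\},
\qquad
|\downset{j}| = 1 + |\{k \in [p] \mid r(\hat{I}_k) < \ell(\hat{I}_j)\}|.
\]
Because every interval has positive length we have $\ell(\hat{I}_j) < r(\hat{I}_j)$, so $j$ itself is never a member of the set on the right; this is precisely why the initialization $r \gets 1$ supplies the correct offset.

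Under the standing assumptions the $2p$ endpoints are pairwise distinct, so the sorted list $P$ is strictly increasing. Next I would prove the invariant that when the iteration handling a left endpoint $q = \ell(\hat{I}_j)$ performs the assignment $c_j \gets r$, the value of $r$ equals $1$ plus the number of right endpoints that precede $q$ in $P$. This is immediate by induction on position, since $r$ starts at $1$, is incremented exactly once per right endpoint, and is never changed at a left endpoint. The right endpoints preceding $\ell(\hat{I}_j)$ in $P$ are exactly the $r(\hat{I}_k)$ with $r(\hat{I}_k) < \ell(\hat{I}_j)$; here I would again invoke that no two intervals share an endpoint, which excludes the boundary case $r(\hat{I}_k) = \ell(\hat{I}_j)$ and makes the order of endpoints in $P$ agree with the strict inequality defining $\prec_\mathcal{I}$. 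Combining with the first paragraph, $c_j = 1 + |\{k \mid r(\hat{I}_k) < \ell(\hat{I}_j)\}| = |\downset{j}|$, as required.

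For the resource bounds, sorting the $2p$ endpoints costs $O(p\log p)$ and dominates the running time; the single pass over $P$ does $O(1)$ work per endpoint, provided each endpoint is tagged with the index of the interval it belongs to and with a flag indicating whether it is a left or right endpoint. The only objects stored are the sorted list $P$, the scalar $r$, and the output vector $c$, each of size $O(p)$, so the space is $O(p)$.

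I do not anticipate a real obstacle: the algorithm is a one-line sweep and the argument is a standard loop-invariant verification. The only point that needs genuine care is the equivalence between ``right endpoint earlier in $P$'' and ``$\prec_\mathcal{I}$-predecessor,'' which is exactly where the no-shared-endpoints and positive-length hypotheses enter; the remainder is bookkeeping.
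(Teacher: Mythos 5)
Your proof is correct and follows the same route as the paper's: both reduce $|\downset{j}|$ to one plus the number of right endpoints lying strictly left of $\ell(\hat{I}_j)$, observe that the counter $r$ holds exactly this value when $\ell(\hat{I}_j)$ is swept, and note that sorting dominates the cost. You simply make explicit the loop invariant and the role of the positive-length and distinct-endpoint assumptions, which the paper leaves implicit.
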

\begin{proof}
To see this, observe that $|\downset{j}|$ is one greater than the number of intervals whose right endpoints are less than $\ell(\hat{I}_j)$.  The counter $r$ is equal to this value when each left endpoint is encountered, so the algorithm is correct.

The running time of Algorithm~\ref{algo_down_set} is dominated by the cost of sorting the endpoints, which will in general be $O(p\log(p))$.  The memory required is determined by the size of $P$ and the cost of storing $c(j)$ for $j \in [p]$, so the total is $O(p)$.
\end{proof}

We can define the $k$-bottom set of $(S, \preceq)$ to be the $k$-top set of $(S, \dual{\preceq})$.  If $\preceq$ is the order generated by a set of intervals, $\dual{\preceq}$ is the order generated by the image of these intervals under the map $x \mapsto -x$.  We can therefore compute the $k$-bottom set of $\preceq_\mathcal{I}$ by applying Algorithm~\ref{algo_down_set} to the negated intervals.

However, it is worth considering how to describe the $k$-bottom set in terms of the original order.  The $k$-bottom set is equal to $\{s \in S \mid |\upset{s}| \leq k\}$.  Inclusion-exclusion allows us to give an alternate formula for $\idx_\leqslant(s)$, which we record as Lemma~\ref{lem_account}.

\begin{lemma}
\label{lem_account}
Let $(S, \leqslant)$ be a linearly ordered set.  Then $\idx_\leqslant(s) = |S| + 1 - |\{t \in S \mid s \leqslant t\}|$.
\end{lemma}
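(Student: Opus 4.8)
This lemma is essentially a counting identity, so the proof will be short. The plan is to fix a linearly ordered set $(S, \leqslant)$ and an element $s \in S$, and partition $S$ according to its relationship with $s$.

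First I would observe that since $\leqslant$ is a linear order, every element $t \in S$ satisfies exactly one of $t < s$, $t = s$, or $s < t$. This gives a partition of $S$ into three blocks: the strict predecessors of $s$, the singleton $\{s\}$, and the strict successors of $s$. Writing $a = |\{t \in S \mid t < s\}|$ and $b = |\{t \in S \mid s < t\}|$, we have $|S| = a + 1 + b$.

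Next I would identify the two quantities in the statement in terms of $a$ and $b$. By definition $\idx_\leqslant(s) = |\{t \in S \mid t \leqslant s\}| = a + 1$, since the elements with $t \leqslant s$ are exactly the strict predecessors together with $s$ itself. Similarly $|\{t \in S \mid s \leqslant t\}| = b + 1$. Substituting $a + 1 = \idx_\leqslant(s)$ and $b + 1 = |\{t \in S \mid s \leqslant t\}|$ into $|S| = a + 1 + b$ gives $|S| = \idx_\leqslant(s) + |\{t \in S \mid s \leqslant t\}| - 1$, which rearranges to the claimed formula $\idx_\leqslant(s) = |S| + 1 - |\{t \in S \mid s \leqslant t\}|$.

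There is no real obstacle here; the only thing to be careful about is the off-by-one bookkeeping, namely remembering that both $\idx_\leqslant(s)$ and $|\{t \in S \mid s \leqslant t\}|$ count $s$ itself, so that $s$ is double-counted once in the sum and must be subtracted back out. The finiteness of $S$ assumed throughout the paper is what makes all these cardinalities well-defined.
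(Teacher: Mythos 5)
Your proof is correct and is essentially the argument the paper has in mind: the paper only gestures at the result via inclusion--exclusion on the sets $\{t \mid t \leqslant s\}$ and $\{t \mid s \leqslant t\}$ (whose union is $S$ by totality and whose intersection is $\{s\}$ by antisymmetry), and your three-block partition is the same double-counting identity written out explicitly. Nothing is missing.
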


The $k$-bottom set of $(S, \preceq)$ can therefore be described as the set of elements whose index is at most $|S| + 1 - k$ in some linear extension of $\preceq$.

\section{Confidence Sets for Rankings}
\label{sec_freq}

Up until now we have considered general interval estimates, but in this section we turn our attention to the case where $\hat{I}_1, \dots, \hat{I}_p$ are confidence intervals.  We will show that $\LE{\preceq_\mathcal{I}}$ is a confidence set for the true ranking and analyze a similar confidence set found in~\cite{klein2020jointCR}.

\subsection{Coverage Probability of the Set of Rankings Compatible with the Data}

We begin by considering the coverage probability of $\LE{\preceq_\mathcal{I}}$.  Our main result is Theorem~\ref{thm_freq_joint}, which allows us to interpret $\LE{\preceq_\mathcal{I}}$ as a confidence region for the true ranking.

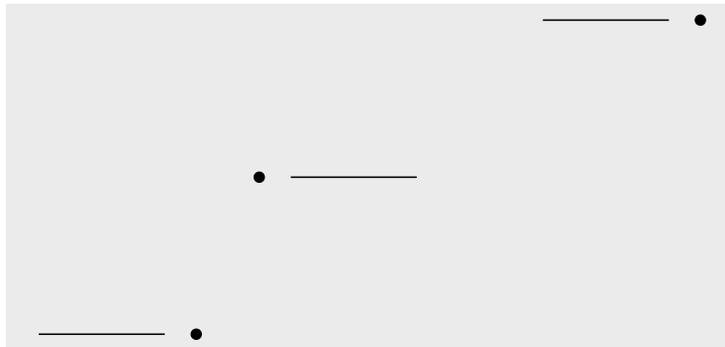
\begin{figure}[t]
\centering
\begin{tikzpicture}[x=1pt,y=1pt]
\definecolor{fillColor}{RGB}{255,255,255}
\path[use as bounding box,fill=fillColor,fill opacity=0.00] (0,0) rectangle (289.08,144.54);
\begin{scope}
\path[clip] (  0.00,  0.00) rectangle (289.08,144.54);
\definecolor{drawColor}{RGB}{255,255,255}
\definecolor{fillColor}{RGB}{255,255,255}

\path[draw=drawColor,line width= 0.6pt,line join=round,line cap=round,fill=fillColor] (  0.00,  0.00) rectangle (289.08,144.54);
\end{scope}
\begin{scope}
\path[clip] (  8.25,  8.25) rectangle (283.58,139.04);
\definecolor{fillColor}{gray}{0.92}

\path[fill=fillColor] (  8.25,  8.25) rectangle (283.58,139.04);
\definecolor{drawColor}{RGB}{0,0,0}
\definecolor{fillColor}{RGB}{0,0,0}

\path[draw=drawColor,line width= 0.4pt,line join=round,line cap=round,fill=fillColor] ( 80.36, 14.20) circle (  1.96);

\path[draw=drawColor,line width= 0.4pt,line join=round,line cap=round,fill=fillColor] (104.20, 73.64) circle (  1.96);

\path[draw=drawColor,line width= 0.4pt,line join=round,line cap=round,fill=fillColor] (271.06,133.10) circle (  1.96);

\path[draw=drawColor,line width= 0.6pt,line join=round] ( 20.77, 14.20) -- ( 68.44, 14.20);

\path[draw=drawColor,line width= 0.6pt,line join=round] (116.12, 73.64) -- (163.79, 73.64);

\path[draw=drawColor,line width= 0.6pt,line join=round] (211.47,133.10) -- (259.15,133.10);
\end{scope}
\end{tikzpicture}
\caption{The true ranking can be compatible with the data even if no confidence interval contains the true value of its parameter.}
\label{fig_freq}
\end{figure}

\begin{theorem}
\label{thm_freq_joint}
Suppose that $\hat{I}_1, \dots, \hat{I}_p$ are confidence intervals with joint coverage probability $1 - \alpha$.  Then $\LE{\preceq_\mathcal{I}}$ contains the true ranking with probability at least $1 - \alpha$.
\end{theorem}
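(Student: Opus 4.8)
The plan is to show that the event on which every confidence interval covers its parameter is contained in the event that the true ranking lies in $\LE{\preceq_\mathcal{I}}$, so that the latter event has probability at least as large.

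First I would let $E$ denote the event that $\theta_j \in \hat{I}_j$ for all $j \in [p]$; by the definition of joint coverage probability, $\Pr(E) \geq 1 - \alpha$. I would then argue that on $E$ the true ranking $\mathbf{r}(\theta_1, \dots, \theta_p)$ is compatible with the data: taking $x_j := \theta_j$, we have $x_j \in \hat{I}_j$ for every $j$ and, trivially, $\mathbf{r}(x_1, \dots, x_p) = \mathbf{r}(\theta_1, \dots, \theta_p)$, which is exactly the definition of compatibility. Here it is worth noting that $\mathbf{r}(\theta_1, \dots, \theta_p)$ is a genuine permutation of $[p]$, since the standing assumption is that the $\theta_j$ are pairwise distinct, so ``the true ranking'' is unambiguous.

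Next I would invoke Theorem~\ref{thm_lin_ext} --- or just its forward direction, Lemma~\ref{lem_lin_ext_fwd} --- to conclude that on $E$ the order $\leqslant_{\mathbf{r}(\theta_1, \dots, \theta_p)}$ is a linear extension of $\preceq_\mathcal{I}$, i.e.\ that the true ranking belongs to $\LE{\preceq_\mathcal{I}}$. Hence $E \subseteq \{\mathbf{r}(\theta_1, \dots, \theta_p) \in \LE{\preceq_\mathcal{I}}\}$, and monotonicity of probability gives $\Pr\!\left(\mathbf{r}(\theta_1, \dots, \theta_p) \in \LE{\preceq_\mathcal{I}}\right) \geq \Pr(E) \geq 1 - \alpha$.

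There is no serious obstacle here; the one point worth care is why the conclusion is ``at least $1 - \alpha$'' rather than an equality. The inclusion of events above can be strict: the true ordering of the $\theta_j$ can be consistent with the interval estimates even when some $\hat{I}_j$ fails to contain $\theta_j$, which is precisely the situation depicted in Figure~\ref{fig_freq}. I would close by flagging this gap, observing that $\LE{\preceq_\mathcal{I}}$ is therefore a valid but in general conservative $1-\alpha$ confidence set for the true ranking, which motivates the comparison with the construction of~\cite{klein2020jointCR} in the remainder of the section.
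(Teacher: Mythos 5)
Your proposal is correct and follows essentially the same route as the paper's own one-line proof: the event that every $\hat{I}_j$ covers $\theta_j$ forces the true ranking to be compatible with the data (hence in $\LE{\preceq_\mathcal{I}}$ via Theorem~\ref{thm_lin_ext}), and this event has probability at least $1-\alpha$. Your additional remarks on why the bound is only an inequality match the paper's discussion of Figure~\ref{fig_freq}.
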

\begin{proof}
If each confidence interval $\hat{I}_k$ contains the true value of the corresponding parameter, then the true ranking is compatible with the data.  This happens with probability at least $1 - \alpha$.
\end{proof}

The true ranking of the parameters may be compatible with the data even when no interval contains the true value of its parameter (see Figure~\ref{fig_freq} for an example).  Therefore the coverage probability of $\LE{\preceq_\mathcal{I}}$ will in general be higher than the joint coverage probability of the original confidence intervals.

\subsection{An Analysis of the Product Confidence Set}

We now turn to the confidence set given in~\cite{klein2020jointCR}.  There the authors gave a probabilistic proof that this set has the correct coverage probability.  We will prove the same result using order theoretic methods, and then show that this region may be more conservative than $\LE{\preceq_\mathcal{I}}$.

The confidence set given in~\cite{klein2020jointCR} is equal to the product of intervals that contain the sample ranks.  In our terms, the interval for $\hat{r}_j$ is equal to $[|\downset{j}|, p + 1 - |\upset{j}|]$.  We know from the discussion in section~\ref{sec_other} that $\idx_\leqslant(j)$ is contained in this interval for any linear extension $\leqslant$ of $\preceq_I$.  Furthermore, for any $i \in [|\downset{j}|, p + 1 - |\upset{j}|]$, we can construct a linear extension $\leqslant$ of $\preceq_I$ such that $\index_\leqslant(j) = i$.  This interval is the set of possible indices of $j$ in all linear extensions of $\preceq_I$, so we will refer to it as the \emph{index interval} of $j$ with respect to $\preceq_I$.

We will show that the index intervals are valid confidence intervals for the individual ranks.  Our analysis hinges on Lemma~\ref{lem_ind_int}.

\begin{lemma}
\label{lem_ind_int}
If a ranking $\mathbf{r}$ is compatible with the data, then $r_j \in [|\downset{j}|, p + 1 - |\upset{j}|]$ for all $j \in [p]$.
\end{lemma}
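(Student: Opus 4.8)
The plan is to leverage Theorem~\ref{thm_lin_ext} to reduce the statement to a purely order-theoretic fact about linear extensions of $\preceq_\mathcal{I}$. Since $\mathbf{r}$ is compatible with the data, $\leqslant_\mathbf{r}$ is a linear extension of $\preceq_\mathcal{I}$, and $r_j$ is precisely $\idx_{\leqslant_\mathbf{r}}(j)$, the index of $j$ in this linear extension. So it suffices to show: for \emph{any} linear extension $\leqslant$ of a finite poset $(S, \preceq)$ and any $j \in S$, we have $|\downset{j}| \leq \idx_\leqslant(j) \leq |S| + 1 - |\upset{j}|$. The excerpt already asserts the lower bound inequality ($|\downset{s}| \leq \idx_\leqslant(s)$) in the discussion preceding Theorem~\ref{thm_algo_down_set}, and Lemma~\ref{lem_account} gives $\idx_\leqslant(j) = |S| + 1 - |\{t \in S \mid j \leqslant t\}|$, so the whole thing is essentially bookkeeping once the two halves are in place.

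First I would prove the lower bound: if $i \preceq j$ then $i \leqslant_\mathbf{r} j$ since $\leqslant_\mathbf{r}$ extends $\preceq_\mathcal{I}$, so $\downset{j} \subseteq \{t \mid t \leqslant_\mathbf{r} j\}$, whence $|\downset{j}| \leq |\{t \mid t \leqslant_\mathbf{r} j\}| = \idx_{\leqslant_\mathbf{r}}(j) = r_j$. Second, for the upper bound I would dualize: every $t$ with $j \preceq t$ satisfies $j \leqslant_\mathbf{r} t$, so $\upset{j} \subseteq \{t \mid j \leqslant_\mathbf{r} t\}$, giving $|\upset{j}| \leq |\{t \mid j \leqslant_\mathbf{r} t\}|$. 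Applying Lemma~\ref{lem_account} with $S = [p]$ and $s = j$, we get $r_j = \idx_{\leqslant_\mathbf{r}}(j) = p + 1 - |\{t \mid j \leqslant_\mathbf{r} t\}| \leq p + 1 - |\upset{j}|$. Combining the two bounds yields $r_j \in [|\downset{j}|, p + 1 - |\upset{j}|]$, and since $j$ was arbitrary this holds for all $j \in [p]$.

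There is no real obstacle here; the statement is a straightforward consequence of machinery already developed. The only thing to be slightly careful about is keeping the two containment directions straight (predecessors go to the left of $j$, successors to the right) and correctly invoking Lemma~\ref{lem_account} to convert the successor-count bound into an index bound, rather than trying to argue about $\idx$ directly from above. If one wanted a self-contained argument without Lemma~\ref{lem_account}, one could instead note that $\{t \mid t \leqslant_\mathbf{r} j\}$ and $\{t \mid j \leqslant_\mathbf{r} t\}$ together cover $[p]$ and intersect exactly in $\{j\}$, so $|\{t \mid t \leqslant_\mathbf{r} j\}| + |\{t \mid j \leqslant_\mathbf{r} t\}| = p + 1$, but invoking the already-stated lemma is cleaner.
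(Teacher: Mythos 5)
Your proposal is correct and follows essentially the same route as the paper: reduce via Theorem~\ref{thm_lin_ext} to a statement about linear extensions, obtain the lower bound from the containment $\downset{j} \subseteq \{t \mid t \leqslant_\mathbf{r} j\}$, and obtain the upper bound by dualizing and invoking Lemma~\ref{lem_account}. You simply spell out the details that the paper compresses into ``by a similar argument and an appeal to Lemma~\ref{lem_account}.''
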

\begin{proof}
We know from Theorem~\ref{thm_lin_ext} that $\mathbf{r}$ is compatible with the data if and only if $\leqslant_\mathbf{r}$ is a linear extension of $\preceq_I$.  Therefore we assume that $\leqslant$ is a linear extension of $\preceq_I$.  Then we must have $j_1 \leqslant j_2$ whenever $j_1 \preceq_I j_2$, and this implies that $\idx_\leqslant(j) \geq |\downset{j}|$.  By a similar argument and an appeal to Lemma~\ref{lem_account}, we can show that $\idx_\leqslant(j) \leq p + 1 - |\upset{j}|$.
\end{proof}

We can now give a new proof of the main result of~\cite{klein2020jointCR}.

\begin{theorem}
\label{thm_freq}[\cite{klein2020jointCR}]
Suppose that $\hat{I}_1, \dots, \hat{I}_p$ are confidence intervals with joint coverage probability $1 - \alpha$.  Then the index intervals of $\preceq_I$ are simultaneous $1 - \alpha$ confidence intervals for the true ranks.
\end{theorem}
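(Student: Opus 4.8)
The plan is to chain together the coverage guarantee of Theorem~\ref{thm_freq_joint} with the deterministic containment established in Lemma~\ref{lem_ind_int}. The point to keep in mind is that ``simultaneous $1 - \alpha$ confidence intervals for the true ranks'' means precisely that the single event $\{\, r_j \in [\,|\downset{j}|,\ p + 1 - |\upset{j}|\,] \text{ for all } j \in [p] \,\}$ has probability at least $1 - \alpha$. So it suffices to exhibit one event of probability at least $1 - \alpha$ on which every true rank lands in its index interval.

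First I would consider the event $E$ that each $\hat{I}_k$ contains the true value $\theta_k$ of its parameter; by hypothesis $\Pr(E) \geq 1 - \alpha$. On $E$, taking $x_j = \theta_j$ for each $j$ witnesses that the true ranking $\mathbf{r}(\theta_1, \dots, \theta_p)$ is compatible with the data in the sense of Section~\ref{sec_theory} (equivalently, by Theorem~\ref{thm_lin_ext}, $\leqslant_{\mathbf{r}}$ is a linear extension of $\preceq_I$ when $\mathbf{r}$ is the true ranking). This is exactly the observation already used to prove Theorem~\ref{thm_freq_joint}.

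Next I would apply Lemma~\ref{lem_ind_int} to this compatible ranking. The lemma yields $r_j \in [\,|\downset{j}|,\ p + 1 - |\upset{j}|\,]$ for every $j \in [p]$ at once, which is precisely the statement that each true rank lies in its index interval. Since this conclusion holds on all of $E$, the probability that every true rank lies in its index interval is at least $\Pr(E) \geq 1 - \alpha$, giving the claimed simultaneous coverage.

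I do not anticipate a genuine obstacle: the theorem is essentially a repackaging of Lemma~\ref{lem_ind_int} together with the elementary fact underlying Theorem~\ref{thm_freq_joint}. The one place that needs care is the quantifier structure --- one must keep the ``for all $j$'' inside the probabilistic event rather than reasoning about the index intervals one coordinate at a time, since the latter would only give marginal coverage. Because Lemma~\ref{lem_ind_int} is already stated with the universal quantifier in the right place, this bookkeeping is handled for free. It is worth remarking, though it plays no role in the proof, that the converse half of the index-interval characterization (that every value in $[\,|\downset{j}|,\ p + 1 - |\upset{j}|\,]$ is attained by the index of $j$ in some linear extension of $\preceq_I$) is not needed for coverage; it is relevant only to the separate point that these intervals cannot be tightened without abandoning the compatibility-based justification.
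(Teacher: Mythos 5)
Your proposal is correct and follows exactly the route the paper intends: the paper's proof is the one-line citation of Theorem~\ref{thm_freq_joint}, Lemma~\ref{lem_ind_int} and Theorem~\ref{thm_lin_ext}, and you have simply written out the chaining of those three results in full, with the quantifier bookkeeping handled as the paper implicitly assumes. Nothing differs in substance, so there is no gap and no alternative approach to compare.
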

\begin{proof}
This follows from Theorem~\ref{thm_freq_joint}, Lemma~\ref{lem_ind_int} and Theorem~\ref{thm_lin_ext}.
\end{proof}

We note that we can compute the index intervals of $\preceq_I$ by applying Algorithm~\ref{algo_down_set} to $\preceq_I$ and its dual.  The time and space complexity of this operation are $O(p\log(p))$ and $O(p)$ respectively.

\begin{figure}[t]
\centering
\begin{tikzpicture}[x=1pt,y=1pt]
\definecolor{fillColor}{RGB}{255,255,255}
\path[use as bounding box,fill=fillColor,fill opacity=0.00] (0,0) rectangle (289.08,144.54);
\begin{scope}
\path[clip] (  0.00,  0.00) rectangle (289.08,144.54);
\definecolor{drawColor}{RGB}{255,255,255}
\definecolor{fillColor}{RGB}{255,255,255}

\path[draw=drawColor,line width= 0.6pt,line join=round,line cap=round,fill=fillColor] (  0.00,  0.00) rectangle (289.08,144.54);
\end{scope}
\begin{scope}
\path[clip] (  8.25,  8.25) rectangle (283.58,139.04);
\definecolor{fillColor}{gray}{0.92}

\path[fill=fillColor] (  8.25,  8.25) rectangle (283.58,139.04);
\definecolor{drawColor}{RGB}{0,0,0}

\path[draw=drawColor,line width= 0.6pt,line join=round] ( 20.77, 14.20) -- (125.96, 14.20);

\path[draw=drawColor,line width= 0.6pt,line join=round] ( 57.04, 43.92) -- (162.24, 43.92);

\path[draw=drawColor,line width= 0.6pt,line join=round] ( 93.32, 73.64) -- (198.51, 73.64);

\path[draw=drawColor,line width= 0.6pt,line join=round] (129.59,103.37) -- (234.79,103.37);

\path[draw=drawColor,line width= 0.6pt,line join=round] (165.87,133.10) -- (271.06,133.10);
\end{scope}
\end{tikzpicture}
\\
\begin{tabular}{|c|c|}
\hline
Parameter&Index Interval\\
\hline
$\theta_1$&[1, 3]\\
$\theta_2$&[1, 4]\\
$\theta_3$&[1, 5]\\
$\theta_4$&[2, 5]\\
$\theta_5$&[3, 5]\\
\hline
\end{tabular}
\caption{A counterexample to the converse of Lemma~\ref{lem_ind_int}.}
\label{fig_counter}
\end{figure}

To show that the product confidence set is conservative, we note that the converse of Lemma~\ref{lem_ind_int} is false.  Consider the intervals shown in Figure~\ref{fig_counter}.  The order $\leqslant$ defined by $3 \leqslant 4 \leqslant 1 \leqslant 2 \leqslant 5$ is a not a linear extension of the order generated by these intervals, but $\idx_\leqslant(j) \in [|\downset{j}|, p + 1 - |\upset{j}|]$ for all $j \in [p]$.  Therefore, while the product confidence set is guaranteed to contain every ranking that is compatible with the data, it can contain rankings that are not compatible with the data.

This conservatism seems to be deeply embedded in the index intervals themselves.  Any intervals which are strictly contained in the index intervals rule out rankings that are compatible with the data.  We find it difficult to imagine that a method that produces narrower intervals would always produce confidence intervals with the correct coverage probability.  However, we do not have a proof of this claim, so we must state it as a conjecture.

\begin{conjecture}
\label{conj_int_width}
No procedure which produces intervals that are strictly contained in the index intervals will always produce valid confidence intervals for the individual ranks.
\end{conjecture}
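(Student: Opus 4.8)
The plan is to argue by contradiction. Suppose $\Phi$ is a procedure that, from the order $\preceq_I$ determined by the data, outputs rank intervals with $\Phi_j\subseteq[\,|\downset{j}|,\,p+1-|\upset{j}|\,]$ for every $j$, and with $\Phi_{j_0}\subsetneq[\,|\downset{j_0}|,\,p+1-|\upset{j_0}|\,]$ for some coordinate $j_0$ on some realizable order $\preceq^0$; say $\Phi_{j_0}$ then omits a value $i_0$ of its index interval. (As is natural, since the output is compared coordinatewise to the index intervals and these are determined by $\preceq_I$, I take $\Phi$ to be a function of $\preceq_I$ alone.) I will exhibit a genuine joint confidence procedure for the parameters at level $1-\alpha$, together with a true parameter value, on which $\Phi$'s rank intervals miss the true ranks with probability tending to one; this contradicts the supposition that $\Phi$ always produces valid $1-\alpha$ confidence intervals for the ranks.

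The reduction uses the tightness of the index intervals recorded in the discussion preceding Lemma~\ref{lem_ind_int}: every value of the index interval of $j$ is attained at $j$ by some linear extension of $\preceq_I$, equivalently (Theorem~\ref{thm_lin_ext}) by some ranking compatible with the data. Applied to $\preceq^0$, $j_0$, $i_0$, this yields a ranking $\mathbf r^0$ compatible with any interval configuration generating $\preceq^0$ with $r^0_{j_0}=i_0$; hence $\mathbf r^0$ lies outside the product of $\Phi$'s intervals. Observe that $\preceq^0$ is an interval order and that $\leqslant_{\mathbf r^0}$ is one of its linear extensions. The strategy is now to arrange for the data to generate the order $\preceq^0$ with high probability while the true parameter vector has ranking exactly $\mathbf r^0$.

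For the confidence procedure take $\widehat\theta_j=\theta_j+\varepsilon_j$ with independent $\varepsilon_j\sim N(0,\sigma^2)$, and use the possibly asymmetric intervals $\widehat I_j=[\widehat\theta_j-a_j,\widehat\theta_j+b_j]$; since this is a location model, for half-widths large enough relative to $\sigma$ the joint coverage is at least $1-\alpha$ regardless of $\theta$, and a single far-away padding coordinate can bring it down to exactly $1-\alpha$. The geometric fact I need is that \emph{every interval order admits, for each of its linear extensions $\leqslant$, an interval representation $\{J_j\}$ with pairwise distinct endpoints together with representatives $\theta_j\in\operatorname{int}(J_j)$ satisfying $\theta_{j_1}<\theta_{j_2}$ whenever $j_1\leqslant j_2$ and $j_1\neq j_2$.} Applying this to $\preceq^0$ and $\leqslant_{\mathbf r^0}$ and setting $a_j=\theta_j-\ell(J_j)$, $b_j=r(J_j)-\theta_j$, a uniform dilation of $\{J_j\}$ and $\{\theta_j\}$ makes all half-widths large relative to $\sigma$ while giving the strict inequalities that define $\preceq^0$ from the intervals slack far exceeding $\sigma$. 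Then $\preceq_{\widehat I}=\preceq^0$ with probability tending to one, and since $\mathbf r(\theta)=\mathbf r^0$ by construction, on that event the true ranking is excluded from $\Phi$'s output, so $\Phi$'s coverage of the true ranks tends to zero.

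The real work is the emphasized geometric fact, and I expect it to be the main obstacle. A symmetric (equal-length) representation will not do: for the semiorder $a\prec c$, $a\parallel b$, $b\parallel c$ the linear extension $a<c<b$ cannot be realized with equal-length intervals, which is precisely why asymmetric half-widths are essential. I would prove the fact by starting from any interval representation of the given order with pairwise distinct endpoints — such representations exist, and are already used in this paper when the endpoints of Mitas's canonical representation are distinguished — and then picking the representatives greedily in $\leqslant$-order, each as small as possible subject to lying in the interior of its own interval and strictly above all previous ones. Because $\leqslant$ is a linear extension, no interval of an earlier element lies entirely to the right of an interval of a later element, so at every step there is strictly positive room to make this choice, and a mild dilation lets all representatives be taken in the interiors. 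The remaining points — sharpening ``at least $1-\alpha$'' to ``exactly $1-\alpha$'' and restricting attention to procedures that depend only on $\preceq_I$ — are routine.
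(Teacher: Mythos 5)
First, a point of order: the paper does not prove this statement. It is explicitly labelled a conjecture, and the surrounding text says ``we do not have a proof of this claim,'' so there is no argument of record to compare yours against. What follows is therefore an assessment of your proposal on its own merits.

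Your overall strategy --- use the tightness of the index intervals to find a compatible ranking $\mathbf{r}^0$ that the shrunken output excludes, then build a location model in which $\mathbf{r}^0$ is the true ranking and the observed order equals $\preceq^0$ with probability tending to one --- is the right way to make rigorous the paper's informal remark that strictly narrower intervals ``rule out rankings that are compatible with the data,'' and the probabilistic half of the construction (random translates $\hat I_j = J_j + \varepsilon_j$ of a dilated representation, so that all the defining strict inequalities of $\preceq^0$ have slack of order $M$ against noise of order $\sigma$) is sound. Two of your worries are smaller than you think, and one is larger. The ``geometric fact'' you single out as the main obstacle is essentially Lemma~\ref{lem_lin_ext_bwd} (Trotter) applied to a distinct-endpoint representation of $\preceq^0$ --- such representations are already produced in Section~\ref{sec_theory} via the Madej/Mitas constructions --- followed by a small perturbation of the chosen points into the interiors; your greedy argument is the standard proof of that lemma and works. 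Conversely, the padding coordinate is a genuine problem as written: appending a far-away parameter changes the poset from $\preceq^0$ to $\preceq^0$ with a new top element, and your hypothesis only gives strict shrinkage of $\Phi$ on $\preceq^0$ itself, so you may not assume $\Phi$ still omits $i_0$ on the augmented order. The fix is to drop the padding entirely and adopt the usual convention that intervals with joint coverage \emph{at least} $1-\alpha$ are valid $1-\alpha$ intervals, which is the reading the proof of Theorem~\ref{thm_freq_joint} itself uses.

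The substantive limitation is the parenthetical restriction to deterministic procedures that are functions of $\preceq_{\mathcal{I}}$ alone. That assumption is doing real work, not just tidying the statement: the adversarial step ``place the truth at the ranking that $\Phi(\preceq^0)$ excludes'' requires that, on the event $\preceq_{\hat I}=\preceq^0$, the omitted value $i_0$ for the fixed coordinate $j_0$ be the same in every realization. A procedure allowed to depend on the full interval configuration, or to randomize, could shrink a different coordinate at a different end on each draw, and then no single choice of true parameter is excluded with probability bounded away from zero; your argument gives nothing against such procedures, while the conjecture as stated quantifies over all of them. (A second, minor, point in the same vein: ``strictly contained'' must be read as omitting at least one integer of the index interval, since ranks are integers.) So what you have is a credible proof of a restricted version of the conjecture --- deterministic, order-based procedures, with the coverage convention made explicit --- rather than of the statement as written; the restriction should be promoted from an aside to a hypothesis, and the gap between the restricted and unrestricted versions acknowledged as open.
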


We also note that in some cases the index intervals may contain precisely the rankings that are compatible with the data.  This is the case for the order generated by the intervals in Figure~\ref{fig_simple}.  We still recommend using $\LE{\preceq_\mathcal{I}}$ in that case because its size directly measures the uncertainty in the ranking.  The size of the product confidence set is correlated with the uncertainty in the ranking, but is harder to interpret, and may even be greater than $p!$ in some cases.

Finally, following~\cite{klein2020jointCR}, we note that the arguments given here regarding coverage probabilities apply to credible regions in the Bayesian setting.  Given a set of credible intervals with joint posterior probability $1 - \alpha$, $\LE{\preceq_\mathcal{I}}$ forms a credible region for the sample ranking with posterior probability at least $1 - \alpha$.  Similarly, the index intervals of $\preceq_\mathcal{I}$ form a credible region for the individual ranks with joint posterior probability at least $1 - \alpha$.

\section{Visualizing the Set of Rankings Compatible with the Data}
\label{sec_viz}

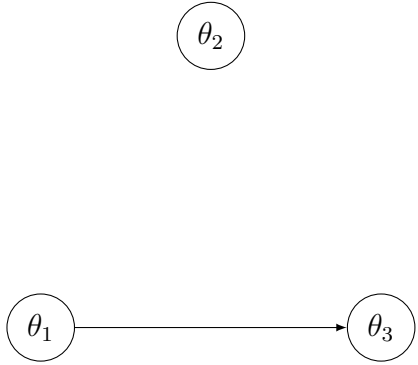
\begin{figure}[t]
\centering
\begin{tikzpicture}[>=latex,line join=bevel,]
\node (a) at (46.146bp,46.146bp) [draw,ellipse] {$\theta_1$};
  \node (c) at (174.44bp,46.146bp) [draw,ellipse] {$\theta_3$};
  \node (b) at (110.29bp,156.15bp) [draw,ellipse] {$\theta_2$}; 
  \draw [->] (a) ..controls (100.73bp,46.146bp) and (109.42bp,46.146bp)  .. (c);
\end{tikzpicture}
\caption{The cover graph of the order generated by the intervals shown in Figure~\ref{fig_simple}.}
\label{fig_viz}
\end{figure}

$|\LE{\preceq_\mathcal{I}}|/p!$ gives us a measure of how much uncertainty is in a ranking, but there is value in being able to visualize that uncertainty.  A plot that depicts every linear extension of $\preceq_\mathcal{I}$ is in general computationally expensive to produce and difficult to interpret, so we need to find some better alternative.

One obvious approach is to plot the intervals themselves, but this is not the only possible choice.  Any graph of $\preceq_\mathcal{I}$ contains the same information as the intervals in $I$.  A graph with fewer edges is more easily understood than one with more, so the cover graph of $\preceq_\mathcal{I}$ is a natural choice.  We can compute the cover graph of $\preceq_\mathcal{I}$ by computing any graph of $\preceq_\mathcal{I}$ and removing any edge $(j_1, j_2)$ such that there is some $j_3$ with $j_1 \prec_\mathcal{I} j_3 \prec_\mathcal{I} j_2$.  This algorithm has a time complexity of $O(p^3)$, but if the number of parameters is large enough that this is too slow, the graph will likely be difficult to interpret or even display on a single page.

While the intervals and the cover graph convey the same information, they emphasize different aspects of it.  In most situations the best choice is to display both plots.  Figure~\ref{fig_viz} displays the cover graph for the order generated by the intervals shown in Figure~\ref{fig_simple}.

In~\cite{klein2020jointCR}, the authors suggest plotting the index intervals to visualize the uncertainty in a ranking.  We know that these can contain rankings incompatible with the data, but in addition it is difficult to determine which intervals overlap from the index intervals.  Consider the index intervals corresponding to the intervals in Figure~\ref{fig_counter}.  Given only these intervals, it is not immediately obvious that $1 \prec_I 4$.  As such, we do not recommend plotting the index intervals.

\section{Experiments with Simulated Data}
\label{sec_sim}

We now turn our attention to the operation of our algorithms on data.  In this section we examine the behavior of our methods in the context of normal mean estimation with an emphasis on studying $|\LE{\preceq_\mathcal{I}}|$ and the coverage probability.

For our experiments we take data from a five-dimensional multivariate Gaussian with independent components.  The means and standard deviations for each experiment are those found in~\cite{klein2011rankingNormal}:
\begin{enumerate}[(i)]

\item \label{exp_case_1} $\bm{\mu} = (10.0, 10.2, 10.4, 10.6, 10.8)$ and $\bm{\sigma} = (0.07, 0.07, 0.07, 0.07, 0.07)$;

\item \label{exp_case_2} $\bm{\mu} = (10.0, 10.2, 10.4, 10.6, 10.8)$ and $\bm{\sigma} = (0.05, 0.05, 0.2, 0.2, 0.2)$;

\item \label{exp_case_3} $\bm{\mu} = (10.0, 10.2, 10.7, 11.2, 11.4)$ and $\bm{\sigma} = (0.15, 0.15, 0.25, 0.15, 0.15)$;

\item \label{exp_case_4} $\bm{\mu} = (10.0, 10.5, 10.7, 11.0, 11.2)$ and $\bm{\sigma} = (0.1, 0.3, 0.3, 0.1, 0.5)$;

\item \label{exp_case_5} $\bm{\mu} = (9.8, 10.5, 10.7, 10.9, 11.6)$ and $\bm{\sigma} = (0.5, 0.1, 0.1, 0.1, 0.5)$.

\end{enumerate}

\begin{table}[p]
\begin{subtable}[h]{0.45\textwidth}
\centering
\begin{tabular}{|c|cccc|}
\hline
Case&$q(0.1)$&$q(0.5)$&$q(0.9)$&$p_c$\\
\hline
1 & 1 & 1 & 1 & 1 \\ 
2 & 1 & 2 & 2 & 1 \\ 
3 & 1 & 1 & 2 & 1 \\ 
4 & 2 & 4 & 4 & 1 \\ 
5 & 1 & 1 & 1 & 1 \\ 
\hline
\end{tabular}
\caption{$90\%$ Bonferroni confidence intervals.}
\label{tab_exp1-a}
\end{subtable}
\hfill
\begin{subtable}[h]{0.45\textwidth}
\centering
\begin{tabular}{|c|cccc|}
\hline
Case&$q(0.1)$&$q(0.5)$&$q(0.9)$&$p_c$\\
\hline
1 & 1 & 1 & 1 & 1 \\ 
2 & 1 & 1 & 2 & 1 \\ 
3 & 1 & 1 & 1 & 1 \\ 
4 & 1 & 2 & 4 & 1 \\ 
5 & 1 & 1 & 1 & 1 \\ 
\hline
\end{tabular}
\caption{$90\%$ unadjusted confidence intervals.}
\label{tab_exp1-b}
\end{subtable}

\begin{subtable}[h]{0.45\textwidth}
\centering
\begin{tabular}{|c|cccc|}
\hline
Case&$q(0.1)$&$q(0.5)$&$q(0.9)$&$p_c$\\
\hline
1 & 1 & 1 & 1 & 1.00 \\ 
2 & 1 & 2 & 3 & 1.00 \\ 
3 & 1 & 1 & 2 & 1.00 \\ 
4 & 2 & 4 & 5 & 1.00 \\ 
5 & 1 & 1 & 1 & 1.00 \\
\hline
\end{tabular}
\caption{$95\%$ Bonferroni confidence intervals.}
\label{tab_exp1-c}
\end{subtable}
\hfill
\begin{subtable}[h]{0.45\textwidth}
\centering
\begin{tabular}{|c|cccc|}
\hline
Case&$q(0.1)$&$q(0.5)$&$q(0.9)$&$p_c$\\
\hline
1 & 1 & 1 & 1 & 1 \\ 
2 & 1 & 1 & 2 & 1 \\ 
3 & 1 & 1 & 1 & 1 \\ 
4 & 1 & 2 & 4 & 1 \\ 
5 & 1 & 1 & 1 & 1 \\
\hline
\end{tabular}
\caption{$95\%$ unadjusted confidence intervals.}
\label{tab_exp1-d}
\end{subtable}
\caption{Results of the first normal mean estimation experiment.}
\label{tab_exp1}
\end{table}

\begin{table}[p]
\centering
\begin{tabular}{|c|cccc|}
\hline
Scaling Factor&$q(0.1)$&$q(0.5)$&$q(0.9)$&$p_c$\\
\hline
0.25 & 1  & 1  & 1  & 1 \\ 
0.5  & 1  & 1  & 1  & 1 \\ 
1    & 1  & 2  & 4  & 1 \\ 
2    & 4  & 8  & 12 & 1 \\ 
4    & 14 & 24 & 40 & 1 \\
\hline
\end{tabular}
\caption{Results of the second normal mean estimation experiment.}
\label{tab_exp2}
\end{table}

\begin{table}[p]
\centering
\begin{tabular}{|c|cc|}
\hline
$1 - \alpha$&$p_{\LE{\preceq_\mathcal{I}}}$&$p_{\text{pc}}$\\
\hline
0.90 & 1.00 & 1.00 \\ 
0.80 & 0.99 & 0.99 \\ 
0.70 & 0.98 & 0.98 \\ 
0.60 & 0.96 & 0.96 \\ 
0.50 & 0.89 & 0.89 \\ 
0.40 & 0.76 & 0.76 \\ 
0.30 & 0.60 & 0.60 \\ 
0.20 & 0.37 & 0.37 \\ 
0.10 & 0.18 & 0.18 \\
\hline
\end{tabular}
\caption{Results of the third normal mean estimation experiment.}
\label{tab_exp3}
\end{table}

For each of these cases, we use four confidence interval procedures based on combinations of a $90\%$ or $95\%$ confidence level and Bonferroni or unadjusted confidence intervals.  No multiple comparison adjustment will produce intervals which are wider than the Bonferroni intervals or narrower than the unadjusted intervals, so these represent the extreme possibilities.

For each trial, we draw $30$ observations from each distribution and examine the joint uncertainty in the ranks of the means based on the appropriate confidence intervals.  We perform $1000$ trials per combination of parameter values and confidence interval procedure.  Because the number of possible rankings is small, we compute exact counts of the rankings compatible with the data.

Table~\ref{tab_exp1} shows the results of our experiments.  In the table header $q(p)$ denotes the $p$th quantile of the number of compatible rankings and $p_c$ denotes the coverage probability.  We make the following observations about the results of our experiments:
\begin{enumerate}
\item As expected, the number of rankings compatible with the data increases with the width of the confidence intervals and the degree to which the densities overlap.
\item Even with a relatively small number of observations, the number of rankings compatible with the data is small.  Out of twenty experiments, none show $q(0.5) \geq 6$.
\item The coverage probability is exactly $1$ in all cases.  We investigate this in further experiments.
\end{enumerate}

In order to investigate the relationship between the density overlap and the number of rankings compatible with the data, we repeated the simulations for case~(\ref{exp_case_4}) with $95\%$ unadjusted confidence intervals and varying scaling factors applied to the standard deviations.  Table~\ref{tab_exp2} shows the results.  We note that even when the standard deviations are high, the number of rankings compatible with the data is still not usually above $20\%$ of all possible rankings.

Finally we investigate the coverage probability of $\LE{\preceq_I}$ and compare it against that of the product confidence set.  We simulate a 1000-dimensional normal distribution with $\mu_i = i$, $\sigma^2_i = 2$ and independent components.  We perform $1000$ trials for each confidence level shown in Table~\ref{tab_exp3}.  For each trial we draw $30$ observations, generate unadjusted confidence intervals, test whether the true ranking is compatible with the data, and test whether the true ranking is contained in the product confidence set.

Table~\ref{tab_exp3} shows the results of this experiment.  $p_{\LE{\preceq_\mathcal{I}}}$ denotes the coverage probability of $\LE{\preceq_\mathcal{I}}$, and $p_{\text{pc}}$ denotes the coverage probability of the product confidence set.  In this case, both methods are similarly conservative.  Further investigation is needed to understand the degree to which the product confidence set is more conservative than $\LE{\preceq_\mathcal{I}}$.

\section{A Case Study: Travel Time Data from the American Community Survey}
\label{sec_data}

In this section, we apply our methods to travel time data from the 2011 American Community Survey.  This data was previously considered in~\cite{klein2020jointCR}, where the authors computed the product confidence region and described it in terms of the index intervals.  Here we approximately measure $|\LE{\preceq_\mathcal{I}}|/p!$ for this data set.

\subsection{The Data}
\label{sec_data_desc}

\begin{figure}[p]
\centering
\input{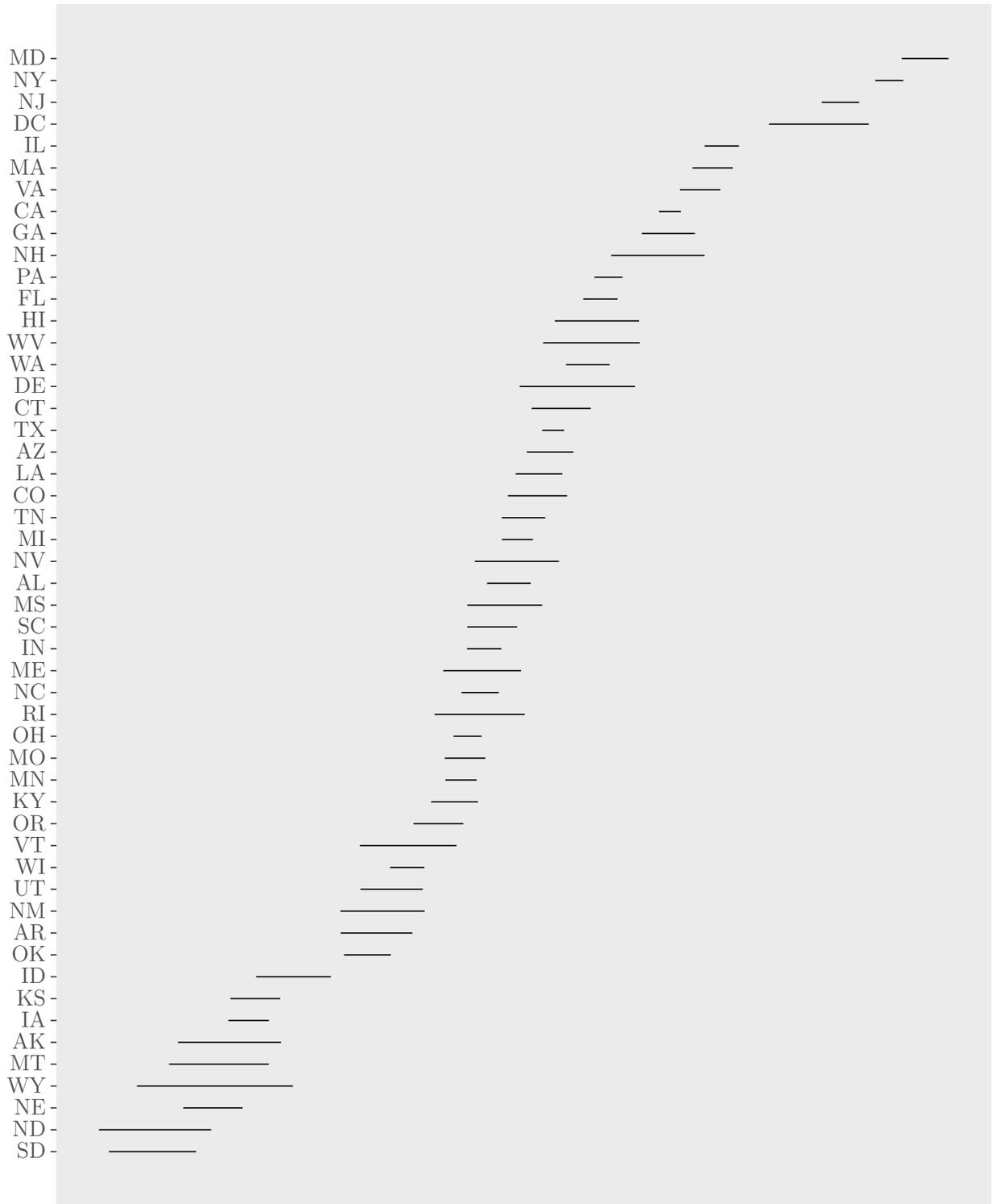}
\caption{Interval plot for the travel time data.}
\label{fig_states_int}
\end{figure}

\begin{figure}[p]
\centering
\resizebox{6.5in}{8in}{%
\input{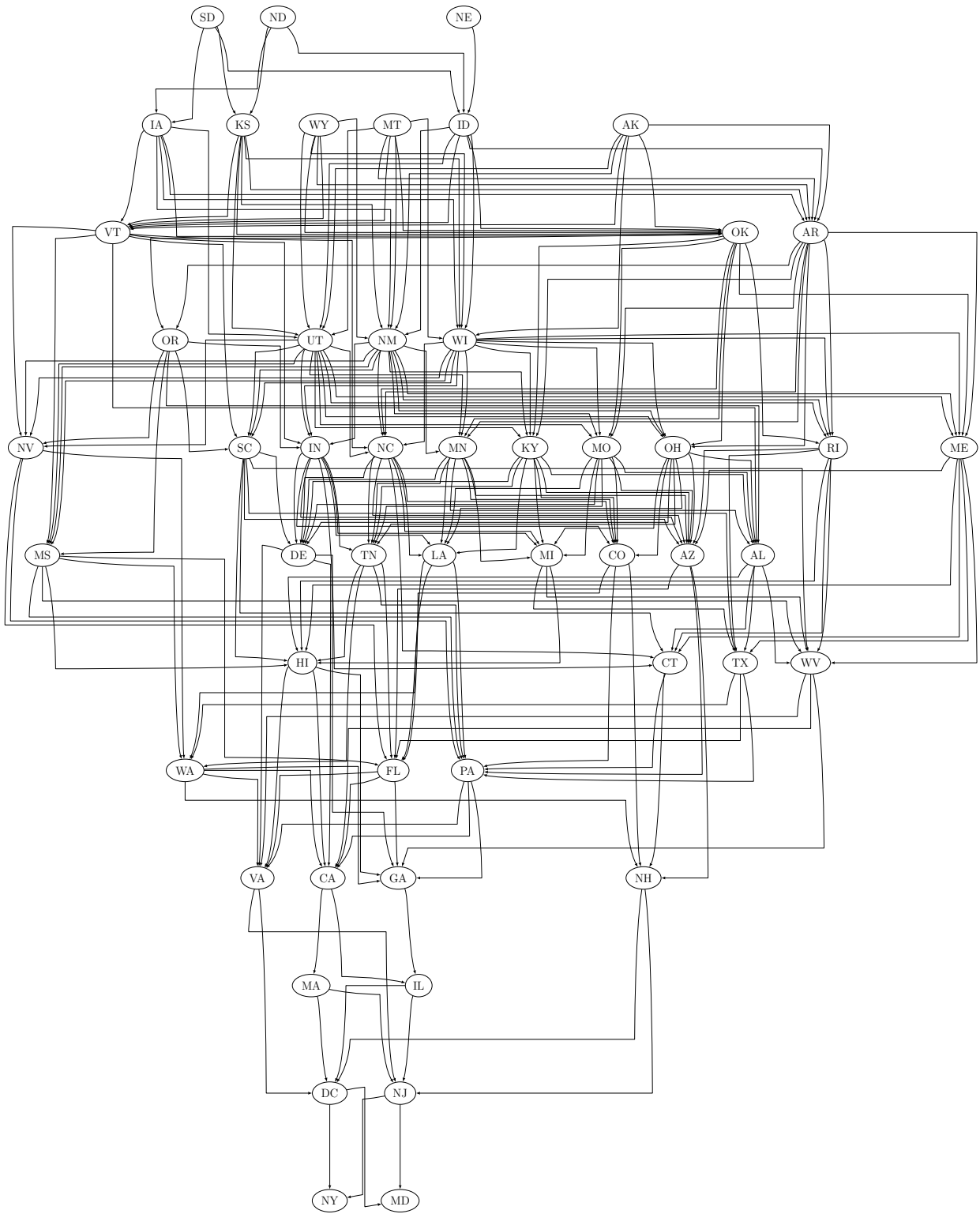}
}
\caption{Cover graph for the travel time data.}
\label{fig_states_graph}
\end{figure}

The data for this case study is taken from the \texttt{TravelTime2011} dataset in the package \package{RankingProject}{wieczorek2020rankPackage}.  The data set describes average travel time to work for Americans over 16 years old who do not work from home in each US state and the District of Columbia.

We repeat the analysis in~\cite{klein2020jointCR} by fitting $90\%$ Bonferroni confidence intervals for the mean travel time.  Figures~\ref{fig_states_int} and~\ref{fig_states_graph} display the interval plot and cover graph for the order generated by the confidence intervals for the mean travel times.  We omit the numerical values of the confidence intervals in order to emphasize their order.  In Figure~\ref{fig_states_int}, it is clear that $\preceq_\mathcal{I}$ may be partitioned into four groups.  While Figure~\ref{fig_states_graph} does not make this so clear, the vertical positions of nodes corresponding to states conveys information about which pairs are comparable that is less clear from the interval plot.

\subsection{The Algorithm}

We use the Adaptive Relaxation Monte Carlo algorithm of~\cite{talvitie2018countInPractice}.  ARMC computes a randomized approximation of the number of linear extensions of a partial order controlled by parameters $\epsilon$ and $\delta$.  It guarantees that the approximation is within a factor of $1 + \epsilon$ of the true value with probability at least $1 - \delta$.  While its worst case running time is exponential in the size of the input, it often outperforms other asymptotically faster algorithms on small posets.

\subsection{Results}

\begin{table}[t]
\centering
\begin{tabular}{|c|c|c|c|}
\hline
Group&States&Count&Proportion\\
\hline
\ref{grp_one}&9&24192&0.07\\
\ref{grp_two}&37&$1.07 \times 10^{22}$&$7.77 \times 10^{-22}$\\
\ref{grp_three}&2&2&1\\
\ref{grp_four}&2&2&1\\
\hline
Overall&51&$1.03 \times 10^{27}$&$6.65 \times 10^{-40}$\\
\hline
\end{tabular}
\caption{Counts and proportions of rankings compatible with the data.}
\label{tab_counts}
\end{table}

As shown in Figure~\ref{fig_states_int}, the states can be partitioned into four groups.  We begin by listing these groups:

\begin{enumerate}

\item\label{grp_one} AK, IA, ID, KS, MT, ND, NE, SD, WY

\item\label{grp_two} AL, AR, AZ, CA, CO, CT, DE, FL, GA, HI, IL, IN, KY, LA, MA, ME, MI, MN, MO, MS, NC, NH, NM, NV, OH, OK, OR, PA, RI, SC, TN, TX, UT, VA, VT, WA, WI, WV

\item\label{grp_three} DC, NJ

\item\label{grp_four} MD, NY

\end{enumerate}

Table~\ref{tab_counts} shows the number and proportion of rankings compatible with the data for each group as well as all the states together.  The counts for groups \ref{grp_one}, \ref{grp_three} and~\ref{grp_four} are exact.  The count for group \ref{grp_two} is generated by ARMC with $\epsilon = 0.01$ and $\delta = 0.005$.  The proportions for each group are calculated by dividing the number of rankings compatible with the data in each partition by the number of possible rankings of that partition.

These experiments were run on a PC with a 2.8 GHz Intel Xeon CPU.  With these parameters, approximating the number of linear extensions of group \ref{grp_one} takes roughly one second.  Performing the same approximation for all states takes only about $30\%$ longer.

\section*{Acknowledgments}

We have used \package{R}{r} and the following packages for our experiments: \linebreak\package{arrangements}{lai2020arrangements}, \package{ggplot2}{wickham2016ggplot2}, \package{magrittr}{bache2014magrittr}, \package{Matrix}{bates2020matrix}, \package{mvtnorm}{genz2020mvtnorm}, \package{reshape2}{wickham2007reshape}, \package{Rcpp}{eddelbuettel2011rcpp}, \package{tikzDevice}{sharpsteen2020tikzDevice}, \linebreak\package{xtable}{dahl2019xtable}.  We have also used code provided by the authors of~\cite{talvitie2018countInPractice} for the analysis in section~\ref{sec_data} and \package{Graphviz}{gansner2000graphviz} for Figures~\ref{fig_viz} and~\ref{fig_states_graph}.

\section*{Supplemental Materials}
\label{sec_supp}

The methods described in this paper are implemented in the \texttt{R} package \texttt{rankUncertainty}.

\bibliographystyle{apalike}
\bibliography{UncertaintyInRanking}

\begin{thebibliography}{}

\bibitem[Arora and Barak, 2009]{arora2009complexity}
Arora, S. and Barak, B. (2009).
\newblock {\em {Computational Complexity: A Modern Approach}}.
\newblock Cambridge University Press.

\bibitem[Bache and Wickham, 2014]{bache2014magrittr}
Bache, S.~M. and Wickham, H. (2014).
\newblock {\em {magrittr: A Forward-Pipe Operator for R}}.
\newblock R package version 1.5.

\bibitem[Bates and Maechler, 2020]{bates2020matrix}
Bates, D. and Maechler, M. (2020).
\newblock {\em {Matrix: Sparse and Dense Matrix Classes and Methods}}.
\newblock R package version 1.3-0.

\bibitem[Bie, 2013]{bie2013ranks}
Bie, T. (2013).
\newblock {Confidence Intervals for Ranks: Theory and Applications in Binomial
  Data}.
\newblock Master's thesis, Uppsala University.

\bibitem[Bondy and Murty, 2008]{bondy2008graphTheory}
Bondy, J.~A. and Murty, U. S.~R. (2008).
\newblock {\em {Graph Theory}}.
\newblock Springer, 1st corrected edition.

\bibitem[Brightwell and Winkler, 1991]{brightwell1991counting}
Brightwell, G. and Winkler, P. (1991).
\newblock {Counting Linear Extensions}.
\newblock {\em Order}, 8(3):225--242.

\bibitem[Cormen et~al., 2009]{cormen2009algorithms}
Cormen, T.~H., Leiserson, C.~E., Rivest, R.~L., and Stein, C. (2009).
\newblock {\em {Introduction to Algorithms}}.
\newblock MIT Press, 3rd edition.

\bibitem[Dahl et~al., 2019]{dahl2019xtable}
Dahl, D.~B., Scott, D., Roosen, C., Magnusson, A., and Swinton, J. (2019).
\newblock {\em {xtable: Export Tables to LaTeX or HTML}}.
\newblock R package version 1.8-4.

\bibitem[Davey and Priestley, 2002]{davey2002order}
Davey, B.~A. and Priestley, H.~A. (2002).
\newblock {\em {Introduction to Lattices and Order}}.
\newblock Cambridge University Press, 2nd edition.

\bibitem[Eddelbuettel and Fran\c{c}ois, 2011]{eddelbuettel2011rcpp}
Eddelbuettel, D. and Fran\c{c}ois, R. (2011).
\newblock {Rcpp: Seamless R and C++ Integration}.
\newblock {\em Journal of Statistical Software}, 40(8):1--18.

\bibitem[Gallier, 2011]{gallier2011discrete}
Gallier, J. (2011).
\newblock {\em {Discrete Mathematics}}.
\newblock Springer.

\bibitem[Gansner and North, 2000]{gansner2000graphviz}
Gansner, E.~R. and North, S.~C. (2000).
\newblock {An Open Graph Visualization System and its Applications to Software
  Engineering}.
\newblock {\em Software: Practice and Experience}, 30(11):1203--1233.

\bibitem[Genz et~al., 2020]{genz2020mvtnorm}
Genz, A., Bretz, F., Miwa, T., Mi, X., Leisch, F., Scheipl, F., and Hothorn, T.
  (2020).
\newblock {\em {mvtnorm: Multivariate Normal and t Distributions}}.
\newblock R package version 1.1-1.

\bibitem[Goldstein and Spiegelhalter, 1996]{goldstein1996tables}
Goldstein, H. and Spiegelhalter, D.~J. (1996).
\newblock {League Tables and Their Limitations: Statistical Issues in
  Comparisons of Institutional Performance}.
\newblock {\em Journal of the Royal Statistical Society Series A},
  159(3):385--443.

\bibitem[Hall and Miller, 2009]{hall2009bootstrap}
Hall, P. and Miller, H. (2009).
\newblock {Using the Bootstrap to Quantify the Authority of an Empirical
  Ranking}.
\newblock {\em Annals of Statistics}, 37(6B):3929--3959.

\bibitem[Hall and Miller, 2010]{hall2010variability}
Hall, P. and Miller, H. (2010).
\newblock {Modeling the Variability of Rankings}.
\newblock {\em The Annals of Statistics}, 38(5):2652--2677.

\bibitem[Holm, 2013]{holm2013ranks}
Holm, S. (2013).
\newblock {Confidence Intervals for Ranks}.
\newblock Working Paper 2013:10, Department of Statistics, Uppsala Universitet.

\bibitem[Klein and Wright, 2011]{klein2011rankingNormal}
Klein, M. and Wright, T. (2011).
\newblock {Ranking Procedures for Several Normal Populations: An Empirical
  Investigation}.
\newblock {\em International Journal of Statistical Sciences}, 11:37--58.

\bibitem[Klein et~al., 2020]{klein2020jointCR}
Klein, M., Wright, T., and Wieczorek, J. (2020).
\newblock {A Joint Confidence Region for an Overall Ranking of Populations}.
\newblock {\em Journal of the Royal Statistical Society Series C},
  69(3):589--606.

\bibitem[Lai, 2020]{lai2020arrangements}
Lai, R. (2020).
\newblock {\em {arrangements: Fast Generators and Iterators for Permutations,
  Combinations, Integer Partitions and Compositions}}.
\newblock R package version 1.1.9.

\bibitem[Lemmers et~al., 2007]{lemmers2007ranks}
Lemmers, O., Kremer, J. A.~M., and Borm, G.~F. (2007).
\newblock {Incorporating Natural Variation into IVF Clinic League Tables}.
\newblock {\em Human Reproduction}, 22(5):1359--1362.

\bibitem[Louis, 1984]{louis1984estimPop}
Louis, T.~A. (1984).
\newblock {Estimating a Population of Parameter Values using Bayes and
  Empirical Bayes Methods}.
\newblock {\em Journal of the American Statistical Association},
  79(386):393--398.

\bibitem[Lukasiewicz et~al., 2014]{lukasiewicz2014probPref}
Lukasiewicz, T., Martinez, M.~V., and Simari, G.~I. (2014).
\newblock {Probabilistic Preference Logic Networks}.
\newblock In {\em Proceedings of the Twenty-first European Conference on
  Artificial Intelligence}, pages 561--566.

\bibitem[Madej and West, 1991]{madej1991interval}
Madej, T. and West, D.~B. (1991).
\newblock {The Interval Inclusion Number of a Partially Ordered Set}.
\newblock {\em Discrete Mathematics}, 88(2-3):259--277.

\bibitem[Mannila and Meek, 2000]{mannila2000sequential}
Mannila, H. and Meek, C. (2000).
\newblock {Global Partial Orders from Sequential Data}.
\newblock In {\em Proceedings of the Sixth ACM SIGKDD International Conference
  on Knowledge Discovery and Data Mining}, pages 161--168.

\bibitem[Marshall and Spiegelhalter, 1998]{marshall1998tables}
Marshall, E.~C. and Spiegelhalter, D.~J. (1998).
\newblock {Reliability of League Tables of In Vitro Fertilisation Clinics:
  Retrospective Analysis of Live Birth Rates}.
\newblock {\em British Medical Journal}, 316:1701--1705.

\bibitem[Mitas, 1994]{mitas1994semiRep}
Mitas, J. (1994).
\newblock {Minimal Representation of Semiorders with Intervals of Same Length}.
\newblock In {\em Orders, Algorithms and Applications}, pages 162--175.

\bibitem[Mohamad et~al., 2018]{mohamad2018cis}
Mohamad, D.~A., Goeman, J.~J., and van Zwet, E.~W. (2018).
\newblock {Simultaneous Confidence Intervals for Ranks With Application to
  Ranking Institutions}.
\newblock arXiv:1812.05507.

\bibitem[Morton et~al., 2009]{morton2009convexRank}
Morton, J., Pachter, L., Shiu, A., Sturmfels, B., and Wienand, O. (2009).
\newblock {Convex Rank Tests and Semigraphoids}.
\newblock {\em SIAM Journal on Discrete Mathematics}, 23(3):1117--1134.

\bibitem[Muise et~al., 2016]{muise2016planRelax}
Muise, C., Muise, J.~C., and McIlraith, S.~A. (2016).
\newblock {Optimal Partial-Order Plan Relaxation via MaxSAT}.
\newblock {\em Journal of Artificial Intelligence Research}, 57:113--149.

\bibitem[Niinim{\"a}ki et~al., 2016]{niinimaki2016structDiscov}
Niinim{\"a}ki, T., Parviainen, P., and Koivisto, M. (2016).
\newblock {Structure Discovery in Bayesian Networks by Sampling Partial
  Orders}.
\newblock {\em Journal of Machine Learning Research}, 17(1):2002--2048.

\bibitem[Peczarski, 2004]{peczarski2004sorting}
Peczarski, M. (2004).
\newblock {New Results in Minimum-Comparison Sorting}.
\newblock {\em Algorithmica}, 40(2):133--145.

\bibitem[{R Core Team}, 2020]{r}
{R Core Team} (2020).
\newblock {\em {R: A Language and Environment for Statistical Computing}}.
\newblock R Foundation for Statistical Computing, Vienna, Austria.

\bibitem[Sharpsteen and Bracken, 2020]{sharpsteen2020tikzDevice}
Sharpsteen, C. and Bracken, C. (2020).
\newblock {\em {tikzDevice: R Graphics Output in LaTeX Format}}.
\newblock R package version 0.12.3.1.

\bibitem[Soliman et~al., 2010]{soliman2010ranking}
Soliman, M.~A., Ilyas, I.~F., and Ben-David, S. (2010).
\newblock {Supporting Ranking Queries on Uncertain and Incomplete Data}.
\newblock {\em The VLDB Journal}, 19(4):477--501.

\bibitem[Talvitie et~al., 2018]{talvitie2018countInPractice}
Talvitie, T., Kangas, K., Niinim{\"a}ki, T., and Koivisto, M. (2018).
\newblock {Counting Linear Extensions in Practice: MCMC versus Exponential
  Monte Carlo}.
\newblock In {\em Thirty-Second AAAI Conference on Artificial Intelligence}.

\bibitem[Trotter, 1997]{trotter1997intervalOrder}
Trotter, W.~T. (1997).
\newblock {New Perspectives on Interval Orders and Interval Graphs}.
\newblock {\em London Mathematical Society Lecture Note Series}, 241:237--286.

\bibitem[Wallace et~al., 1996]{wallace1996causal}
Wallace, C., Korb, K.~B., and Dai, H. (1996).
\newblock {Causal Discovery via MML}.
\newblock In {\em Proceedings of the Thirteenth International Conference on
  Machine Learning}, volume~96, pages 516--524.

\bibitem[Wickham, 2007]{wickham2007reshape}
Wickham, H. (2007).
\newblock {Reshaping Data with the {reshape} Package}.
\newblock {\em Journal of Statistical Software}, 21(12):1--20.

\bibitem[Wickham, 2016]{wickham2016ggplot2}
Wickham, H. (2016).
\newblock {\em {ggplot2: Elegant Graphics for Data Analysis}}.
\newblock Springer-Verlag New York.

\bibitem[Wieczorek, 2020]{wieczorek2020rankPackage}
Wieczorek, J. (2020).
\newblock {\em {RankingProject: The Ranking Project: Visualizations for
  Comparing Populations}}.
\newblock R package version 0.2.0.

\bibitem[Wright et~al., 2014]{wright2014ranking}
Wright, T., Klein, M., and Wieczorek, J. (2014).
\newblock {Ranking Populations Based on Sample Survey Data}.
\newblock Research Report Statistics 2014-07, Center for Statistical Research
  and Methodology, US Bureau of the Census, Washington DC.

\bibitem[Xie et~al., 2009]{xie2009ties}
Xie, M., Singh, K., and Zhang, C.-H. (2009).
\newblock {Confidence Intervals for Population Ranks in the Presence of Ties
  and Near Ties}.
\newblock {\em Journal of the American Statistical Association},
  104(486):775--788.

\bibitem[Zhang et~al., 2014]{zhang2014rankCIs}
Zhang, S., Luo, J., Zhu, L., Stinchcomb, D.~G., Campbell, D., Carter, G.,
  Gilkeson, S., and Feuer, E.~J. (2014).
\newblock {Confidence Intervals for Ranks of Age-Adjusted Rates Across States
  or Counties}.
\newblock {\em Statistics in Medicine}, 33(11):1853--1866.

\bibitem[Zuk et~al., 2007]{zuk2007uncertainty}
Zuk, O., Ein-Dor, L., and Domany, E. (2007).
\newblock {Ranking Under Uncertainty}.
\newblock In {\em Proceedings of the Twenty-Third Conference on Uncertainty in
  Artificial Intelligence}.

\end{thebibliography}

\end{document}